\pgfplotsset{compat=1.17}
\tikzset{%
	block/.style    = {draw, thick, rectangle, minimum height = 3em,
		minimum width = 3em},
	sum/.style      = {draw, circle, node distance = 2cm}, 
	input/.style    = {coordinate}, 
	output/.style   = {coordinate} 
}
\newtheorem{ass}{Assumption}
\newtheorem{lem}{Lemma}
\newtheorem{thm}{Theorem}
\newtheorem{rem}{Remark}
\newtheorem{definition}{Definition}
\newtheorem{example}{Example}
\Crefname{ass}{Assumption}{Assumptions}
\Crefname{lem}{Lemma}{Lemmas}
\DeclareMathOperator*{\argmin}{arg\,min}
\newcommand{\stepupdatei}{i}
\newcommand{\stepupdate}{\step}
\newcommand{\step}{t}
\newcommand{\riskfun}{H}
\newcommand{\Tfinal}{I} 
\newcommand{\Twait}{T_{\text{wait}}}
\newcommand{\Tcollect}{T_{\text{col}}}
\newcommand{\ccollect}{c_{\text{col}}}
\newcommand{\crandom}{c_{\text{rand}}}
\newcommand{\cslack}{c_{\text{sl}}}
\newcommand{\Tn}{T_n}
\newcommand{\gammaopt}{\bm{\gamma}_{I}}
\newcommand{\marginerror}{\rho_1}
\newcommand{\marginprobability}{\rho_2}
\newacronym{mpc}{MPC}{Model Predictive Control}
\newacronym{RMPC}{RMPC}{Robust Model Predictive Control}
\newacronym{gp}{GP}{Gaussian process}
\newacronym{SMPC}{SMPC}{Stochastic Model Predictive Control}
\newacronym{SCMPC}{SCMPC}{Scenario Model Predictive Control}
\newacronym{POMDP}{POMDP}{Partially Observable Markov Decision Process}
\newacronym{MDP}{MDP}{Markov Decision Process}
\newacronym{iid}{iid}{independent and identically distributed}
\newacronym{ocp}{OCP}{optimal control problem}
\newacronym{socp}{SOCP}{Stochastic Optimal Control Problem}
\newacronym{ISS}{ISS}{input-to-state stable}
\DeclareRobustCommand{\qed}{%
  \ifmmode 
  \else \leavevmode\unskip\penalty9999 \hbox{}\nobreak\hfill
  \fi
  \quad\hbox{\qedsymbol}}
  \newcommand{\openbox}{\leavevmode
  \hbox to.77778em{%
  \hfil\vrule
  \vbox to.675em{\hrule width.6em\vfil\hrule}%
  \vrule\hfil}}
\newcommand{\qedsymbol}{\openbox}
\patchcmd{\algorithmic}{\addtolength{\ALC@tlm}{\leftmargin} }{\addtolength{\ALC@tlm}{\leftmargin}}{}{}
\newcommand{\executeiffilenewer}[3]{%
	\ifnum\pdfstrcmp{\pdffilemoddate{#1}}%
	{\pdffilemoddate{#2}}>0%
	{\immediate\write18{#3}}\fi%
}
\newcommand{%
	\executeiffilenewer{.svg}{.pdf}%
	{inkscape -z -D --file=.svg %
		--export-pdf=.pdf --export-latex}%
	\input{.pdf_tex}%
}[1]{%
	\executeiffilenewer{#1.svg}{#1.pdf}%
	{inkscape -z -D --file=#1.svg %
		--export-pdf=#1.pdf --export-latex}%
	\input{#1.pdf_tex}%
}
\def\BibTeX{{\rm B\kern-.05em{\sc i\kern-.025em b}\kern-.08em
    T\kern-.1667em\lower.7ex\hbox{E}\kern-.125emX}}
\begin{document}
\title{Online Constraint Tightening in Stochastic Model Predictive Control: A Regression Approach}
\author{Alexandre Capone, \IEEEmembership{Student Member, IEEE}, Tim Brüdigam, \IEEEmembership{Member, IEEE},\\ and Sandra Hirche, \IEEEmembership{Fellow, IEEE}
\thanks{Alexandre Capone and Sandra Hirche are with the School of Computation, Information and Technology, Technical University of Munich, Arcisstraße 21, 80333 Munich, Germany. Tim Brüdigam is with the Chair of Automatic Control Engineering, Technical University of Munich, Arcisstraße 21, 80333 Munich, Germany. (e-mail: alexandre.capone@tum.de; tim.bruedigam@tum.de; hirche@tum.de.).}}

\maketitle

\begin{abstract}
Solving chance-constrained stochastic optimal control problems is a significant challenge in control. This is because no analytical solutions exist for up to a handful of special cases. A common and computationally efficient approach for tackling chance-constrained stochastic optimal control problems consists of reformulating the chance constraints as hard constraints with a constraint-tightening parameter. However, in such approaches, the choice of constraint-tightening parameter remains challenging, and guarantees can mostly be obtained assuming that the process noise distribution is known a priori. Moreover, the chance constraints are often not tightly satisfied, leading to unnecessarily high costs. This work proposes a data-driven approach for learning the constraint-tightening parameters online during control. To this end, we reformulate the choice of constraint-tightening parameter for the closed-loop as a binary regression problem. We then leverage a highly expressive \gls{gp} model for binary regression to approximate the smallest constraint-tightening parameters that satisfy the chance constraints. By tuning the algorithm parameters appropriately, we show that the resulting constraint-tightening parameters satisfy the chance constraints up to an arbitrarily small margin with high probability. Our approach yields constraint-tightening parameters that tightly satisfy the chance constraints in numerical experiments, resulting in a lower average cost than three other state-of-the-art approaches.
\end{abstract}

\begin{IEEEkeywords}
Autonomous systems, data-driven control, \glspl{gp}, machine learning, online learning, optimal control, reinforcement learning, statistical learning, stochastic processes, uncertain systems
\end{IEEEkeywords}

\section{Introduction}
\label{section:introduction}
\IEEEPARstart{S}{ystems} subject to constraint requirements are ubiquitous and pose considerable challenges for control design. A tool that has proven effective when dealing with such systems is \gls{mpc}, an optimization-based control approach that selects control inputs by minimizing a finite horizon objective function subject to input and state constraints. However, although some control applications require specific constraints to hold at all times, in practice, it is seldom possible to guarantee constraint satisfaction if pronounced uncertainties are present, and we can only satisfy constraints with a specific probability. Furthermore, in non-safety-critical applications, it is often acceptable to allow a low probability of constraint violation, e.g., in building climate control \cite{OldewurtelEtalMorari2014}, process control \cite{SchwarmNikolaou1999, JuradoEtalRubio2015}, or power systems \cite{KumarEtalZavala2018, JiangEtalDong2019}. These considerations have motivated the development of so-called \gls{SMPC} techniques, which allow constraint violations with a pre-specified probability \cite{Mesbah2016,mayne2016robust, FarinaGiulioniScattolini2016}. This approach to system constraints, commonly referred to as \textit{chance constraints}, significantly reduces conservatism compared to deterministic constraints, as possible worst-case uncertainty realizations do not necessarily need to be considered.

Although \gls{SMPC} algorithms allow a less conservative approach to system constraints compared to conventional \gls{mpc} techniques, their development is significantly more challenging. This is because the corresponding \gls{socp} generally cannot be solved exactly, and approximations must be employed. A commonly encountered approximation consists of so-called scenario approaches, which solve an optimal control problem where the cost and rate of constraint satisfaction are averaged over a finite number of sampled scenarios \cite{SchildbachEtalMorari2014,CalafioreFagiano2013,BlackmoreEtalWilliams2010}. While such methods can approximate the solution of the stochastic optimal control problem arbitrarily accurately, they often require a high number of sample scenarios, resulting in computationally expensive control algorithms. A comparatively inexpensive alternative consists of so-called analytic reformulations \cite{FarinaGiulioniScattolini2016}, which choose control inputs by solving a deterministic \gls{ocp} based on the nominal model with tightened constraints. The tightened constraints are then chosen to guarantee closed-loop satisfaction of the chance constraints. 

The most significant challenge in analytic reformulations of chance-constrained stochastic optimal control problems is the choice of constraint-tightening parameters. In particular, too loose values potentially lead to prohibitively frequent constraint violations. Conversely, excessively conservative parameter values will lead to an unnecessarily low number of constraint violations without fully exploiting the flexibility offered by chance constraints. For linear systems with zero-mean normally distributed uncertainties, we can reformulate the chance constraints as deterministic constraints for the predicted mean, which allows us to obtain constraint-tightening parameters that satisfy the chance constraints for the closed-loop system \cite{BlackmoreOnoWilliams2011, FarinaEtalScattolini2015}. Similarly, for \gls{iid} uncertainties with non-Gaussian distributions, Chebyshev inequalities can be applied to compute the worst-case magnitude of the stochastic uncertainty, which in turn allows us to compute constraint-tightening parameters that guarantee the satisfaction of the chance constraints \cite{FarinaGiulioniScattolini2016,hewing2018stochastic}. However, the resulting bound is potentially conservative, as the Chebyshev inequality is coarse for most distributions. Alternatively, sampling-based techniques can be brought to bear to compute the tightened constraints \cite{ LorenzenEtalAllgoewer2017,hewing2020scenario}. A drawback of the abovementioned approaches is that an exact description of the system and underlying uncertainty are required. Moreover, unless the distribution of the underlying uncertainty is Gaussian, they rely on conservative estimates of the worst-case error to compute constraint-tightening parameters. This potentially does not fully exploit the flexibility offered by a chance-constrained approach. This issue can be addressed by employing simulation-based methods, where Monte Carlo simulations of the stochastic system model are performed to inform the choice of constraint-tightening parameter \cite{paulson2018nonlinear,bradford2019nonlinear,bradford2020stochastic,wang2021recursive}. However, like in the non-simulation-based approaches, an exact system characterization is required to guarantee that the chance constraints are satisfied. Furthermore, full system simulations are computationally expensive and can only be performed for a finite horizon. As an alternative to simulation-based approaches, \cite{OldewurtelEtalLygeros2013} updates the constraint-tightening parameters during control. However, the method assumes a linear system with strong controllability and reachability properties often unmet in practice. Moreover, it computes the constraint-tightening parameters based on the average number of constraint violations over time, potentially exhibiting slow convergence. 

This article explores a purely data-driven approach for obtaining constraint-tightening parameters for analytic reformulations of chance-constrained stochastic optimal control problems. Our approach treats the choice of constraint-tightening parameters in a closed-loop setting as a binary regression problem, where the probability of constraint satisfaction is obtained via the stationary measure of the underlying system. The constraint-tightening parameter is then chosen by minimizing the corresponding entries subject to the chance constraints. Our approach does not require perfect knowledge of the nominal system plant nor the distribution of the underlying uncertainty, provided that the closed-loop system exhibits stable behavior. Furthermore, it is applied online, allowing potential improvements in performance during control. 


\subsection{Contribution and Structure}
\label{subsec:contribution}
The main contributions of this article are as follows. \begin{enumerate*}[label=(\roman*)]
    \item We provide sufficient conditions under which the choice of constraint-tightening parameter for the analytic reformulation of a stochastic optimal control problem can be treated as a binary regression problem.
    
    \item We present a \gls{gp} binary regression-based approach for adapting optimization parameters in an online fashion during control. To the best of our knowledge, it consists of the first binary regression-based attempt to learn chance constraints. It is considerably easier to implement than more common regression-based approaches, where full system knowledge and simulations are required.
    
    \item Our main theoretical result shows that our approach yields constraint-tightening parameters that satisfy the chance constraints up to an arbitrarily small value, provided that sufficient data is collected. We reach this result under a 
    few assumptions that are not restrictive in practice. 
    
    \item We compare our approach to different state-of-the-art approaches \cite{FarinaGiulioniScattolini2016,LorenzenEtalAllgoewer2017,hewing2018stochastic}, which more system knowledge and often yield conservative constraint-tightening parameters. 
\end{enumerate*}

The remainder of this article is structured in the following manner. We begin by formally describing the problem considered in this article in \Cref{section:problemstatement}. In \Cref{section:controlandclosedloop}, we introduce the analytic reformulation-based \gls{mpc} control law and discuss assumptions regarding the closed-loop system. \Cref{sect:gaussprocforclass} introduces \glspl{gp} for binary regression and describes how they are used to model the probability of constraint satisfaction. We show how to use \gls{gp}-based binary regression to derive an online update rule for the constraint-tightening parameters in \Cref{section:onlineupdate}. In \Cref{section:validation}, we employ a numerical simulation to illustrate the proposed method and compare it to other constraint-tightening approaches. Lastly, in \Cref{section:conclusion}, we provide some concluding remarks. The theoretical proofs of the main results can be found in Appendix - Proof of \Cref{theorem:mainresult}.

\subsection{Notation}
\label{subsec:notation}
The symbols $\mathbb{N}$, $\mathbb{R}$, $\mathbb{R}_+$, and $\mathbb{R}_-$ denote the natural, real, non-negative real, and non-positive real numbers, respectively. Plain letters denote scalars. Lowercase/uppercase boldface letters denote vectors/matrices. For a vector $\bm{a}\in\mathbb{R}^d$, we use $\Vert \bm{a}\Vert_2$ to denote its Euclidean norm. For some integer $d$, the symbol $\bm{I}_d$ denotes the $d$-dimensional identity matrix and $\bm{1}_{d_{\gamma}}$ denotes a ${d_{\gamma}}$-dimensional vector where every entry is equal to one. For a matrix $\bm{A}$, we employ $\text{eig}_{\text{max}}(\bm{A})$ and $\text{eig}_{\text{min}}(\bm{A})$ to denote the largest and smallest eigenvalue of $\bm{A}$, respectively. For an open set $\mathcal{O}\subseteq \mathbb{R}^d$, the symbol $\mathbb{I}_{\mathcal{O}}(\cdot)$ denotes the indicator function over the set $\mathcal{O}$, $\text{vol}(\mathcal{O})$ denotes the volume of $\mathcal{O}$. The symbol $\emptyset$ denotes the empty set. The symbol $p(A\vert B)$ denotes the probability density function of event $A$ conditioned on event $B$. For a set $\mathcal{S}$, we use $\vert \mathcal{S}\vert$ to denote its cardinality; for a scalar $a$, we use $\vert a\vert$ to denote its absolute value. For the states and inputs of a closed-loop system, the subscript $\tau \vert \step$ refers to the prediction at time step $\step + \tau$ given the state at time $\step$. The symbol $\mathcal{N}(\bm{\mu},\bm{\Sigma}^2)$ denotes a multivariate Gaussian distribution with mean $\bm{\mu}$ and covariance matrix $\bm{\Sigma}^2$. For scalars $a_1,\ldots,a_N$, the expression $\text{diag}(a_1,\ldots,a_N)$ corresponds to the $N$-dimensional diagonal matrix with $a_1,\ldots,a_N$ as diagonal entries.

\section{Problem Setup}
\label{section:problemstatement}

\label{subsec:dynamics}
Consider the stochastic discrete-time system

\begin{align}
\label{eq:systemeq}
    \bm{x}_{\step+1} = f(\bm{x}_{\step}, \bm{u}_{\step}, \bm{w}_{\step}), 
\end{align}
where $\bm{x}_{\step} \in \mathcal{X} \subseteq \mathbb{R}^{d_x}$ and $\bm{u}_{\step}\in \mathcal{U}\subseteq \mathbb{R}^{d_u}$ respectively denote the system's state and control input, and $ \bm{w}_{\step}\in \mathcal{W} \subseteq \mathbb{R}^{d_w}$ is \gls{iid} process noise with (potentially unknown) probability density function $p_w(\cdot)$. For the system model for $f(\cdot,\cdot,\cdot)$, we assume to have a linear approximation of the form
\begin{align}
\label{eq:linearmodel}
	f(\bm{x}_{\step}, \bm{u}_{\step}, \bm{w}_{\step}) \approx \bm{A}\bm{x}_{\step} + \bm{B}\bm{u}_{\step} + \bm{w}_{\step}.
\end{align}
We also assume that the closed-loop system is required to satisfy chance constraints. In this paper, we aim to satisfy the chance constraints in the long run, i.e., when the transient behavior has passed and the closed-loop system has reached some form of stationarity. This is expressed as
\begin{equation}
	\label{eq:chanceconstraints}
	  \lim_{T \rightarrow \infty} \mathbb{E}\left[\frac{1}{T} \sum_{t=0}^T {\mathbb{I}_{\mathbb{R}_-^{d_{\text{c}}}}\left(\bm{h}(\bm{x}_{\step})\right)} \right]\geq 1-\delta 
  \end{equation}
where $\bm{h}: \mathcal{X} \rightarrow \mathbb{R}^{d_{\text{c}}}$ is a function representing nonlinear constraints and $\delta \in (0,1)$ is a design parameter that specifies the desired risk. Unless specified otherwise, we employ $\mathbb{E}[\cdot]$ to denote the expected value with respect to the random variables $\bm{w}_{\step}$, conditioned on the initial state $\bm{x}_0$. In \Cref{section:controlandclosedloop}, we show how \eqref{eq:chanceconstraints} can be rewritten in the standard formulation for chance constraints using the concept of stationary distributions.


Our objective is then to design a control law that (approximately) minimizes the expected cost
\begin{equation}
	\label{eq:infinitehorizoncost}
	\lim \limits_{T\rightarrow\infty} {\mathbb{E}\left[\frac{1}{T}\sum\limits_{\step=0}^{T} l \left(\bm{x}_{\step},\bm{u}_{\step}\right)\right]},
\end{equation}
with stage cost $l:\mathcal{X}\times \mathcal{U} \rightarrow \mathbb{R}_{+}$, while satisfying the long-term chance constraints \eqref{eq:chanceconstraints}.

\section{Control strategy and closed-loop behavior}
\label{section:controlandclosedloop}
In this section, we describe the control law used by our approach and discuss assumptions pertaining to the closed-loop behavior of the system.

\subsection{Model predictive control and backup strategy}


 An \gls{mpc} control strategy for stochastic systems typically involves solving a finite horizon stochastic optimal control problem. However, solving a stochastic finite-horizon reformulation of \eqref{eq:chanceconstraints} is generally impossible since the linear model \eqref{eq:linearmodel} seldom corresponds to the exact system dynamics $\bm{f}$ and the probability density $p_w$ of the stochastic disturbance $\bm{w}_{\step}$ is not known perfectly. Furthermore, computing the corresponding expected cost and chance constraints is often intractable even with perfect knowledge of $\bm{f}$ and $p_w$. To address these issues, we employ a deterministic finite-time horizon reformulation of \eqref{eq:chanceconstraints} and \eqref{eq:infinitehorizoncost} with tightened constraints, which we use to find approximately optimal control inputs. This is commonly known as a finite-horizon analytic reformulation of the \gls{socp} described in \Cref{section:problemstatement} \cite{FarinaGiulioniScattolini2016}.
 In addition, we employ a relaxed version of the \gls{mpc} control strategy as a backup control law for points in the state space where the original \gls{mpc} strategy is infeasible. The backup control law is computed by greedily including slack variables in the optimal control problem until feasibility is recovered. The standard \gls{mpc} and backup strategies can then be subsumed into a single \gls{mpc}-based control law. The corresponding optimal control problem given time step $\step$ and state $\bm{x}_{ \step}$ is
 \begin{subequations}
  \label{eq:deterministicocp}
\begin{align}
	\IEEEyesnumber
	\min_{\bm{U}_{N -1\vert \step}}  \ &  \rlap{$\sum\limits_{\tau=0}^{N-1} l \left(\bm{x}_{\tau\vert \step},\bm{u}_{\tau \vert \step}\right) + l_{N}\left(\bm{x}_{N\vert \step}\right)+ \cslack\Vert \bm{s}_{i,\tau}^B \Vert_2^2$}  &\qquad \qquad   \\
	\text{s.t. } &  \rlap{$\bm{x}_{\tau+1\vert \step} = \bm{A} \bm{x}_{\tau\vert \step} + \bm{B} \bm{u}_{\tau \vert \step}$,}& \IEEEyessubnumber \\
	& \bm{u}_{\tau\vert \step} \in \mathcal{U} ,\IEEEyessubnumber\\
	& \bm{h}\left(\bm{x}_{\tau\vert \step} \right) \leq   - \bm{g}_{\stepupdatei,\tau} + \bm{s}_{\tau}^B & \\ 
 &  \bm{x}_{0\vert \step} = \bm{x}_{ \step} \\
 & \bm{0} \leq \bm{s}_{\tau}^B, \qquad \forall \ \tau\leq B \\
 & \bm{0} = \bm{s}_{\tau}^B, \qquad \forall \ \tau > B
 \IEEEyessubnumber 
\end{align}
\end{subequations}
for all $\tau \in \{0,\dots,N-1\} $,
where $\cslack>0$, $ \bm{s}_{\tau}^B$ are slack variables and $B$ is selected by solving
\[B = \min\limits_{ \tau } \tau , \quad \text{s.t. \eqref{eq:deterministicocp} is feasible}. \]
The symbol $\bm{U}_{N-1\vert \step} \coloneqq \left(\bm{u}_{0\vert \step}, \ldots, \bm{u}_{N-1\vert \step} \right)$ denotes the concatenated input sequence and 
$l_{N}:\mathcal{X} \rightarrow \mathbb{R}$ the terminal cost function. The constraint-tightening parameters 
\begin{align}
\bm{g}_{\stepupdatei,\tau}\coloneqq \bm{g}_{\stepupdate_i,\tau} \in \mathbb{R}^{d_{\text{c}}},
\end{align}
which serve to compensate for deviations between the true stochastic system \eqref{eq:systemeq} and the deterministic model used in \eqref{eq:deterministicocp}, are allowed to change at irregular time steps $\step_i$ and play a central role in our approach. In general, it is desirable to employ the lowest-possible values of $ \bm{g}_{\stepupdatei,0},\ldots,\bm{g}_{\stepupdatei,N}$ that satisfy the chance constraints. This is because low values for $ \bm{g}_{\stepupdatei,0},\ldots,\bm{g}_{\stepupdatei,N}$ imply a larger feasible region for \eqref{eq:deterministicocp} and, by extension, control inputs that potentially minimize the cost more effectively. In this article, we propose learning $ \bm{g}_{\stepupdatei,0},\ldots,\bm{g}_{\stepupdatei,N}$ from data by iteratively estimating the resulting probability of constraint satisfaction and minimizing a weighted sum of $ \bm{g}_{\stepupdatei,0},\ldots,\bm{g}_{\stepupdatei,N}$ accordingly. This is described detailedly in \Cref{section:onlineupdate}. For simplicity of exposition, we henceforth subsume the constraint-tightening parameters after the update step $\step_i$ as
\[
\left(\bm{g}_{\stepupdatei,0}^\top,\ldots,\bm{g}_{\stepupdatei,N}^\top\right)^\top \eqqcolon \bm{\gamma}_{\stepupdatei}\in \Gamma,
\]
where $\Gamma \subset \mathbb{R}^{d_{\gamma}}$ is compact and $d_{\gamma} \coloneqq N d_{\text{c}}$.

We denote the minimizer of \eqref{eq:deterministicocp} given $\bm{\gamma}_{\stepupdatei}$ as
\[\bm{U}^*_{N-1\vert\step}(\bm{x}_{\stepupdatei},\bm{\gamma}_{\stepupdatei}) \coloneqq \left(\bm{u}_{0\vert \step}^*(\bm{x}_{\step},\bm{\gamma}_{\stepupdatei}), \ldots, \bm{u}_{N-1\vert \step}^*(\bm{x}_{\step},\bm{\gamma}_{\stepupdatei}) \right).\] 
Following standard \gls{mpc}, we recursively solve \eqref{eq:deterministicocp} given the current state $\bm{x}_{\step}$ and $\bm{\gamma}_{\stepupdatei}$, then apply the first entry $\bm{u}_{0\vert \step}^*(\bm{x}_{\step},\bm{\gamma}_{\stepupdatei})$ of $\bm{U}^*_{N-1 \vert \step}(\bm{x}_{\step},\bm{\gamma}_{\stepupdatei})$ to the system. 
This implicitly specifies a control law that depends exclusively on $\bm{x}_{\step}$ and $\bm{\gamma}_{\stepupdatei}$, which we henceforth refer to as
\begin{align}
\label{eq:closedloopcontrol}
	\bm{u}^{\text{MPC}}(\bm{x}_{\step},{\bm{\gamma}}_{\stepupdatei}) = \bm{u}_{0\vert \step}^*(\bm{x}_{\step},\bm{\gamma}_{\stepupdatei}). 
\end{align}

\subsection{Closed-loop dynamics for fixed constraints}
In this paper, we do not require to know $\bm{f}$ nor $p_w$. Instead, most of the assumptions leveraged by our approach pertain to the closed-loop dynamics of the system \eqref{eq:systemeq} under the control law \eqref{eq:closedloopcontrol} and a \textit{fixed} vector of constraint-tightening parameters~$\bm{\gamma}$. To introduce and discuss our assumptions, we require the closed-loop system for a fixed $\bm{\gamma}$, given by
\begin{align}
	\label{eq:closedloop}
	\begin{split}
		\bm{x}_{\step+1} &=  \bm{f}^{\text{MPC}}_{\bm{\gamma}}\left(\bm{x}_{\step},  \bm{w}_{\step}\right) \\
		&\coloneqq \bm{f}\left(\bm{x}_{\step},\bm{u}^{\text{MPC}}(\bm{x}_{\step},{\bm{\gamma}}\right),  \bm{w}_{\step}) .
	\end{split}
\end{align}
Furthermore, for any subset $\mathcal{O}\subseteq \mathcal{X}$ and state $\bm{x}\in\mathcal{X}$, we employ 
    \[\text{Pr} \left(\bm{f}^{\text{MPC}}_{\bm{\gamma}}\left(\bm{x},  \bm{w} \right) \in \mathcal{O} \ \vert \ \bm{x}\right) = \int \limits_{\mathbb{R}^{d_x}} \mathbb{I}_{\mathcal{O}}\left(\bm{f}^{\text{MPC}}_{\bm{\gamma}}\left(\bm{x},  \bm{w} \right)\right) p_w(\bm{w}) \mathrm{d}\bm{w}
    \]
    to denote the probability of the closed-loop system \eqref{eq:closedloop} reaching the set $\mathcal{O}$ from $\bm{x}$ in a single time step.

In rough terms, our main requirements are that the closed-loop system converges to a compact set in expectation and does not exhibit periodic behavior for all $\bm{\gamma}$. This is expressed in the following assumption.

\begin{ass}
\label{ass:stableloop}
    There exists a compact subset $\mathcal{C}\subseteq \mathcal{X}$, a known measurable function $V:\mathcal{X}\rightarrow [1,\infty)$, a probability measure $\nu$ on $\mathcal{X}$, and positive constants $\beta,\tilde{\beta},K<\infty$ and $\mu<1$, such that the following conditions 
    hold for all $\bm{\gamma}\in\Gamma$:
    \begin{enumerate}[label=\roman*)]
    \vspace{0.2cm}
        \item \label{condition:minirization} $ \text{Pr} \left(\bm{f}^{\text{MPC}}_{\bm{\gamma}}\left(\bm{x},  \bm{w} \right) \in \mathcal{O} \ \vert \ \bm{x}\right) \geq \tilde{\beta}\nu\left(\mathcal{O}\right)$ holds for all $\bm{x} \in \mathcal{C}$ and all open subsets $\mathcal{O}\subseteq \mathcal{X}$.
         \vspace{0.2cm}
        \item \label{condition:lyapunov}$\mathbb{E}\left(V\left(\bm{f}^{\text{MPC}}_{\bm{\gamma}}\left(\bm{x},  \bm{w} \right)\right) \right) \leq \begin{cases} \mu V(\bm{x}), \quad & \text{if}\quad  \bm{x} \notin \mathcal{C}, \\ 
        K, \quad &\text{if}  \quad \bm{x} \in \mathcal{C}.\end{cases}$
         \vspace{0.2cm}
        \item \label{condition:aperiodicity} $\tilde{\beta} \nu(\mathcal{C})\geq \beta $.
    \vspace{0.2cm}
    \end{enumerate}
\end{ass}

Condition \ref{condition:minirization} in 
 \Cref{ass:stableloop} requires that the probability distribution of $\bm{x}$ be minorized by another probability distribution within the smaller level sets of $V$. It implies the existence of an open set of positive measure that can be reached from any point in $\mathcal{C}$ with non-zero probability.
Condition \ref{condition:lyapunov} is similar to drift conditions commonly found in stochastic control literature \cite{chatterjee2015stability}, where $V$ plays a role akin to that of a Lyapunov function. Condition \ref{condition:aperiodicity} states that a point starting in $\mathcal{C}$ reaches a subset of $\mathcal{C}$ of non-zero measure with probability larger than $0$. It implies the closed-loop system does not exhibit periodic behavior with probability one.

\Cref{ass:stableloop} can be easily shown to hold for systems with additive Gaussian noise where the noiseless realization of the system is globally asymptotically stable.
\ref{condition:minirization} and \ref{condition:aperiodicity} are not very restrictive, whereas \ref{condition:lyapunov} is similar to a stability condition. Note that $\nu$, $V$, $\beta$, $\tilde{\beta}$, $K$ and $\mu$ are assumed to be identical for all $\bm{\gamma}$, and that the function $V$ is assumed to be known. However, as will be shown later, these requirements are only necessary to derive a potentially conservative upper bound on the convergence speed of the closed-loop system to its stationary measure and are not strictly necessary in practice.

We now provide an example for a system that satisfies \Cref{ass:stableloop}.

\begin{example}

Consider the setting where the closed-loop system equations for an arbitrary fixed $\bm{\gamma}\in\Gamma$ are given by
\begin{align}
\label{eq:closedloop_example}
\bm{x}_{\step+1} = \bm{f}^{\text{MPC}}_{\bm{\gamma}}\left(\bm{x}_{\step+1},  \bm{w}_{\step} \right) \eqqcolon \tilde{\bm{f}}^{\text{MPC}}_{\bm{\gamma}}\left(\bm{x}_{\step}\right) +  \bm{w}_{\step},
\end{align}
where $\tilde{\bm{f}}^{\text{MPC}}_{\bm{\gamma}}:\mathcal{X}\rightarrow\mathcal{X}$ and $\bm{w}$ is normally distributed with zero mean and positive-definite covariance matrix $\bm{\Sigma}_{\bm{w}}^2$. In the following, we assume $\bm{\Sigma}_{\bm{w}}$ is symmetric without loss of generality. Furthermore, let
\[V(\bm{x}) = 1 + \frac{1}{2}\bm{x}^{\top} \bm{P}\bm{x},\] 
where $\bm{P}$ is a positive-definite matrix, and assume there exists a quadratic matrix $\tilde{\bm{A}}$ and positive definite matrix $\bm{M}$, such that $
\tilde{\bm{A}}^\top \bm{P}\tilde{\bm{A}} - \bm{P} + \bm{M}$
is negative-definite, $\bm{P} - \bm{M}$ is positive-definite, and
\begin{align}
\label{eq:majorizinglyapunovfunction}
(\tilde{\bm{f}}^{\text{MPC}}_{\bm{\gamma}}\left(\bm{x}\right))^\top\bm{P} (\tilde{\bm{f}}^{\text{MPC}}_{\bm{\gamma}}\left(\bm{x}\right)) \leq \bm{x}^\top \tilde{\bm{A}}^\top \bm{P}\tilde{\bm{A}}\bm{x}
\end{align}
holds for all $\bm{x}\in \mathcal{X}$ and all $\bm{\gamma} \in \Gamma$. Then, the closed-loop system \eqref{eq:closedloop_example} can be shown to satisfy \Cref{ass:stableloop} as follows. 

We begin with Condition \ref{condition:lyapunov} of \Cref{ass:stableloop}. We have
    \begin{align}
    \label{eq:inequalities_example}
    \begin{split}
        &\mathbb{E} \left[V\left(\bm{f}^{\text{MPC}}_{\bm{\gamma}}\left(\bm{x},  \bm{w} \right)\right)  \right] = \mathbb{E} \left[V\left(\tilde{\bm{f}}^{\text{MPC}}_{\bm{\gamma}}\left(\bm{x} \right)  + \bm{w}\right)\right] \\
        \leq  & \mathbb{E}\Big[1 + \frac{1}{2}\Big(\bm{x}^\top \tilde{\bm{A}}^\top \bm{P}\tilde{\bm{A}}\bm{x} +(\tilde{\bm{f}}^{\text{MPC}}_{\bm{\gamma}}\left(\bm{x}\right))^\top\bm{P}\bm{w} \\
        & \qquad + \bm{w}^\top\bm{P}\tilde{\bm{f}}^{\text{MPC}}_{\bm{\gamma}}\left(\bm{x}\right) + \bm{w}^\top\bm{P}\bm{w} \Big)\Big] \\  =  & \mathbb{E}\Big[1 + \frac{1}{2}\Big(\bm{x}^\top \tilde{\bm{A}}^\top \bm{P}\tilde{\bm{A}}\bm{x}  + \bm{w}^\top\bm{P}\bm{w} \Big)\Big]  \\ =  & 1 + \frac{1}{2}\Big(\bm{x}^\top \tilde{\bm{A}}^\top \bm{P}\tilde{\bm{A}}\bm{x}  + \text{tr}\left(\Sigma_{\bm{w}}^\top\bm{P}\Sigma_{\bm{w}}\right) \Big) \\
        <  & 1 + \frac{1}{2}\Big(  \bm{x}^\top ( \bm{P}-\bm{M})\bm{x}  + \text{tr}\left(\Sigma_{\bm{w}}^\top\bm{P}\Sigma_{\bm{w}}\right) \Big) \\
        <  & 1 + \frac{1}{2}\Big(  \text{eig}_{\text{max}}\Big(\bm{P}^{-{\frac{1}{2}}}( \bm{P}-\bm{M})\bm{P}^{-{\frac{1}{2}}}\Big)\bm{x}^\top \bm{P}\bm{x} \\&\quad  + \text{tr}\left(\Sigma_{\bm{w}}^\top\bm{P}\Sigma_{\bm{w}}\right) \Big),
        \end{split}
    \end{align}
   where the last inequality is due to 
   \[\frac{ \bm{x}^\top(\bm{P}-\bm{M})\bm{x}}{\bm{x}^\top\bm{P}\bm{x}} \leq \text{eig}_{\text{max}}\Big(\bm{P}^{-{\frac{1}{2}}}( \bm{P}-\bm{M})\bm{P}^{-{\frac{1}{2}}}\Big).\]
   Define 
   \begin{align} \label{eq:mu_example} {\mu} \coloneqq \frac{1}{2} \left(1+\text{eig}_{\text{max}}\Big(\bm{P}^{-{\frac{1}{2}}}( \bm{P}-\bm{M})\bm{P}^{-{\frac{1}{2}}}\Big)\right)\end{align} and note that 
   ${\mu} <1$. Moreover, let
    \begin{align}
    \label{eq:compact_set_example}
    \mathcal{C} = \left\{ \bm{x} \ \Big \vert \ \bm{x}^\top\bm{P}\bm{x} \leq \frac{1}{1-{\mu}}\text{tr}\left(\bm{\Sigma}_{\bm{w}}^\top\bm{P}\bm{\Sigma}_{\bm{w}}\right) +2\right\} .\end{align}
    For any $\bm{x}\notin \mathcal{C}$, from \eqref{eq:inequalities_example} we then obtain
    \begin{align*}
        &\mathbb{E} \left[V\left(\bm{f}^{\text{MPC}}_{\bm{\gamma}}\left(\bm{x},  \bm{w} \right)\right)  \right]  \\
        < &  1 + \frac{1}{2}\Big(  (2{\mu}-1)\bm{x}^\top \bm{P}\bm{x} +(1-{\mu})\bm{x}^\top\bm{P}\bm{x} -2+2{\mu}\Big) \\
        = & \mu + \frac{1}{2}\Big(  \mu\bm{x}^\top \bm{P}\bm{x} \Big) = \mu V(\bm{x}).
    \end{align*}
    Furthermore, for $\bm{x}\in \mathcal{C}$, we have
    \begin{align}
    \label{eq:K_example}
    \begin{split}
        &\mathbb{E} \left[V\left(\bm{f}^{\text{MPC}}_{\bm{\gamma}}\left(\bm{x},  \bm{w} \right)\right)  \right] 
        = 1+\frac{1}{2}\bm{x}^\top\bm{P}\bm{x} \\
        \leq & 1+  \frac{1}{2}\left(\frac{1}{1-{\mu}}\text{tr}\left(\Sigma_{\bm{w}}^\top\bm{P}\Sigma_{\bm{w}}\right) +1\right) \eqqcolon K.
        \end{split}
    \end{align}
    Hence, Condition \ref{condition:lyapunov} of \Cref{ass:stableloop} holds with $\mathcal{C}$ as in \eqref{eq:compact_set_example}, $\mu$ as in \eqref{eq:mu_example}, and $K$ as in \eqref{eq:K_example}. 

    We now show that Condition \ref{condition:minirization} holds for $\nu$ equal to the uniform distribution on $\mathcal{C}$, i.e.,
    \begin{align*}
        \nu(\mathcal{O}) = \frac{1}{\text{vol}(\mathcal{C})}\int\limits_{\mathcal{O} \cap \mathcal{C}} d\bm{x}.
    \end{align*}
    Note that, for any $\mathcal{O}$ with $\mathcal{O}\cap \mathcal{C} = \emptyset$,
    \begin{align*}
        \text{Pr} \left(\bm{f}^{\text{MPC}}_{\bm{\gamma}}\left(\bm{x},  \bm{w} \right) \in \mathcal{O} \ \vert \ \bm{x}\right) \geq \nu(\mathcal{O}) = 0
    \end{align*}
    holds trivially. Hence, we assume $\mathcal{O} \subseteq \mathcal{C}$ without loss of generality. Note that, due to \eqref{eq:majorizinglyapunovfunction} and the definition of $\mathcal{C}$,
    \begin{align*}
\Vert \tilde{\bm{f}}^{\text{MPC}}_{\bm{\gamma}}\left(\bm{x}\right)\Vert_2^2 &\leq \frac{\text{eig}_{\text{max}}(\tilde{\bm{A}}^\top\tilde{\bm{A}})}{\text{eig}_{\text{min}}(\bm{P})} \left(\frac{1}{1-{\mu}}\text{tr}\left(\Sigma_{\bm{w}}^\top\bm{P}\Sigma_{\bm{w}}\right) +2\right) \\
\Vert\bm{x}\Vert_2&\leq \frac{1}{\text{eig}_{\text{min}}(\bm{P})}\left(\frac{1}{1-{\mu}}\text{tr}\left(\bm{\Sigma}_{\bm{w}}^\top\bm{P}\bm{\Sigma}_{\bm{w}}\right) +2\right)
\end{align*}
holds for all $\bm{x}\in\mathcal{C}$. Since $p_w$ is a zero-mean multivariate Gaussian distribution with positive definite covariance matrix $\bm{\Sigma}_{\bm{w}}^2$, this implies that there exists a constant $c>0$, such that
\begin{align}
    c \leq p_w\left(\tilde{\bm{w}}-\tilde{\bm{f}}^{\text{MPC}}_{\bm{\gamma}}(\bm{x})\right)
\end{align}
holds for all $\tilde{\bm{w}}\in\mathcal{C}$ and all $\bm{x}\in\mathcal{C}$. For any $\bm{x} \in \mathcal{C}$, we then have
    \begin{align*}
        & \text{Pr} \left(\bm{f}^{\text{MPC}}_{\bm{\gamma}}\left(\bm{x},  \bm{w} \right) \in \mathcal{O} \ \big\vert \ \bm{x}\right) = \text{Pr} \left(\tilde{\bm{f}}^{\text{MPC}}_{\bm{\gamma}}\left(\bm{x}  \right)+\bm{w}  \in \mathcal{O} \ \Big\vert \ \bm{x}\right) \\ 
        =& \int \limits_{\mathcal{X}} \mathbb{I}_{\mathcal{O}}\left(\tilde{\bm{f}}^{\text{MPC}}_{\bm{\gamma}}\left(\bm{x} \right)+   \bm{w} \right) p_w(\bm{w})  \mathrm{d}\bm{w} \\ 
        =& \int \limits_{\mathcal{X}} \mathbb{I}_{\mathcal{O}}\left(   \tilde{\bm{w}} \right) p_w\left(\tilde{\bm{w}}-\tilde{\bm{f}}^{\text{MPC}}_{\bm{\gamma}}\left(\bm{x} \right) \right)  \mathrm{d}\tilde{\bm{w}}\\
        \geq & \text{vol}(\mathcal{O})\inf\limits_{\bm{x},\tilde{\bm{w}}\in\mathcal{C}} p_w\left(\tilde{\bm{w}}-\tilde{\bm{f}}^{\text{MPC}}_{\bm{\gamma}}\left(\bm{x} \right) \right) \geq c\text{vol}(\mathcal{O})  \\=& \frac{c}{\text{vol}(\mathcal{C})} \nu(\mathcal{O}),
    \end{align*}
    hence Condition \ref{condition:minirization} holds for $\tilde{\beta}=\frac{c}{\text{vol}(\mathcal{C})} $. Condition \ref{condition:aperiodicity} trivially follows with $\beta=\tilde{\beta}$.

\end{example}

\Cref{ass:stableloop} allows us to establish convergence of the distribution of the closed-loop system to a stationary probability measure. Before we state this formally, we introduce the definition of a stationary probability measure for the closed-loop system \eqref{eq:closedloop}.

\begin{definition}
\label{def:stationarymeasure}
Consider the closed-loop dynamical system \eqref{eq:closedloop} for a fixed $\bm{\gamma} \in \Gamma$. A probability measure $\pi_{\bm{\gamma}}$ on $\mathcal{X}$ is said to be stationary under $\bm{f}^{\text{MPC}}_{\bm{\gamma}}$ if, for every open subset $\mathcal{O} \subseteq \mathcal{X}$,
    \begin{align}
    \label{eq:stationarymeasure}
         \int \limits_{\mathcal{X}} \text{Pr} \left(\bm{f}^{\text{MPC}}_{\bm{\gamma}}\left(\bm{x},  \bm{w} \right) \in \mathcal{O} \ \vert \ \bm{x}\right) \pi_{\bm{\gamma}}(\mathrm{d}\bm{x}) = \pi_{\bm{\gamma}}(\mathcal{O}).
    \end{align} 
\end{definition}

Intuitively, a stationary measure is a probability measure that does not change under the system dynamics. This means that if the initial condition is distributed according to the stationary measure, then the probability distribution of the state remains constant for all future time steps. Given \Cref{ass:stableloop}, it can be shown that such a unique stationary measure exists for the closed-loop system \eqref{eq:closedloop} and that the distribution of the state converges to that of the stationary distribution. This is stated in the following.
\begin{lem}[\citenum{baxendale2005renewal}, Theorem 1.1]
\label{lem:stationarymeasure}
    Let \Cref{ass:stableloop} hold. Then, for every $\bm{\gamma} \in \Gamma$, the closed-loop system \eqref{eq:closedloop} has a unique stationary probability measure $\pi_{\bm{\gamma}}$. Furthermore, there exist positive constants $\varphi\in (0,1)$ and $\vartheta\in\mathbb{R}_+$ that are independent of $\bm{\gamma}$, such that
    \begin{align}
       \Bigg\vert \mathbb{E}\left[{\mathbb{I}_{\mathbb{R}_-^{d_{\text{c}}}}\left(\bm{h}(\bm{x}_{\step})\right)} \right] - \mathbb{E}_{\pi_{\bm{\gamma}}}\left[\mathbb{I}_{\mathbb{R}_-^{d_{\text{c}}}}\left(\bm{h}(\bm{x})\right) \right]\Bigg\vert \leq \vartheta V(\bm{x}_0) \varphi^{\step}
    \end{align}
    holds for all $\step\in \mathbb{N}$ and all $\mathbb{x}_0 \in \mathcal{X}$. Here $\bm{x}_{\step+1}$ obeys \eqref{eq:closedloop} with initial condition $\bm{x}_0$ and $\mathbb{E}_{\pi_{\bm{\gamma}}}$ corresponds to the expected value subject to $\bm{x}$ being distributed according to the stationary distribution $\pi_{\bm{\gamma}}$.
\end{lem}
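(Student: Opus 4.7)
The statement is attributed as Theorem 1.1 of \cite{baxendale2005renewal}, so my plan is not to reprove a renewal-theoretic convergence result from scratch but rather to verify that \Cref{ass:stableloop} matches the hypotheses of that theorem, and then to extract the uniform-in-$\bm{\gamma}$ rate from its conclusion. First, I would recall the three standing hypotheses used by Baxendale's renewal theorem for Markov chains on a general state space: (a) a minorization condition on a small set, (b) a geometric drift (Foster--Lyapunov) inequality outside that small set, and (c) strong aperiodicity on the small set. I would then match these one-to-one with our assumption: condition \ref{condition:minirization} supplies the uniform minorization on $\mathcal{C}$ with reference measure $\tilde\beta \nu$; condition \ref{condition:lyapunov} supplies the geometric drift inequality with Lyapunov function $V$, contraction $\mu<1$ and bound $K$ on $\mathcal{C}$; and condition \ref{condition:aperiodicity} is precisely Baxendale's strong-aperiodicity bound $\tilde{\beta}\nu(\mathcal{C})\geq\beta>0$, which rules out periodic returns to $\mathcal{C}$.

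Given this identification, invoking the cited theorem yields for each fixed $\bm{\gamma}\in\Gamma$ the existence and uniqueness of a stationary probability measure $\pi_{\bm{\gamma}}$ and a geometric convergence bound in the $V$-weighted total-variation norm: there are constants $\varphi_{\bm{\gamma}}\in(0,1)$ and $\vartheta_{\bm{\gamma}}\in\mathbb{R}_+$ such that for every measurable $\phi$ with $|\phi|\leq V$,
\[
\left|\mathbb{E}[\phi(\bm{x}_{\step})\mid\bm{x}_0]-\mathbb{E}_{\pi_{\bm{\gamma}}}[\phi(\bm{x})]\right|\leq \vartheta_{\bm{\gamma}} V(\bm{x}_0)\varphi_{\bm{\gamma}}^{\step}.
\]
Applying this to $\phi(\bm{x})=\mathbb{I}_{\mathbb{R}_-^{d_{\text{c}}}}(\bm{h}(\bm{x}))$ is legitimate, since $V\geq 1$ dominates this indicator, and gives the desired inequality for each $\bm{\gamma}$.

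The remaining, and main, step is to argue that $\varphi$ and $\vartheta$ can be chosen independently of $\bm{\gamma}$. Here I would inspect the explicit form of the constants produced in \cite{baxendale2005renewal}: they are constructed as polynomial/exponential expressions in only $\beta$, $\tilde{\beta}$, $K$, $\mu$, and the level set parameters of $V$. Since \Cref{ass:stableloop} explicitly requires all these constants and the function $V$ to be identical across all $\bm{\gamma}\in\Gamma$, the constants produced by the theorem are automatically $\bm{\gamma}$-uniform, and one can set $\varphi\coloneqq\sup_{\bm{\gamma}}\varphi_{\bm{\gamma}}<1$ and $\vartheta\coloneqq\sup_{\bm{\gamma}}\vartheta_{\bm{\gamma}}<\infty$. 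This uniformity is the only delicate point of the proof; the rest is a direct citation of the renewal-theoretic convergence bound.

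The main obstacle I anticipate is not conceptual but bookkeeping: one must trace through the constants in \cite{baxendale2005renewal} carefully enough to confirm that no $\bm{\gamma}$-dependent quantity sneaks in through, for instance, a hidden continuity or moment assumption on the transition kernel. Once that verification is done, the claim of \Cref{lem:stationarymeasure} is an immediate consequence of the cited theorem applied with the bounded test function $\phi=\mathbb{I}_{\mathbb{R}_-^{d_{\text{c}}}}\circ\bm{h}$.
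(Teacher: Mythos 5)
Your proposal is correct and takes essentially the same route as the paper, which states this lemma as a direct citation of Baxendale's Theorem 1.1 without further argument: Conditions (i)--(iii) of Assumption 1 are exactly the minorization, geometric drift, and strong aperiodicity hypotheses of that theorem, the indicator test function is dominated by $V\geq 1$, and the $\bm{\gamma}$-uniformity of $\varphi$ and $\vartheta$ follows because Baxendale's constants are explicit functions of $\beta$, $\tilde{\beta}$, $K$, $\mu$, and $V$ alone, all of which Assumption 1 requires to be the same for every $\bm{\gamma}\in\Gamma$. Your elaboration of the hypothesis-matching and the uniformity argument is precisely the justification the paper leaves implicit.
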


\Cref{lem:stationarymeasure} states that the closed-loop system \eqref{eq:closedloop} converges exponentially to a steady state in the sense that it obeys the time-independent probability distribution $\pi_{\bm{\gamma}}$. When deriving a data-driven model of the probability of constraint satisfaction, we leverage \Cref{lem:stationarymeasure} by waiting long enough until the distribution of the state has approximately reached  $\pi_{\bm{\gamma}}$. This allows us to estimate the long-term behavior of the system from samples over a finite horizon.

Given \Cref{lem:stationarymeasure}, we can then rewrite the left-hand side of the long-term chance constraints \eqref{eq:chanceconstraints} as
 \begin{align}	
     \begin{split}	 
      &\lim_{T \rightarrow \infty} \mathbb{E}\left[\frac{1}{T} \sum_{t=0}^T {\mathbb{I}_{\mathbb{R}_-^{d_{\text{c}}}}\left(\bm{h}(\bm{x}_{\step})\right)}  
      \right] = \mathbb{E}_{\pi_{\bm{\gamma}}}\left[\mathbb{I}_{\mathbb{R}_-^{d_{\text{c}}}}\left(\bm{h}(\bm{x})\right) \right] ,
    \end{split}
  \end{align}
  where $\bm{x}_{\step+1}$ obeys \eqref{eq:closedloop}. This allows us to rewrite the chance constraints \eqref{eq:chanceconstraints} using the more common formulation 
\eqref{eq:chanceconstraints}
 \begin{align}	\label{eq:chanceconstraint_reform}
     \begin{split}	 
 	  \riskfun(\bm{\gamma})\coloneqq \text{Pr}_{ \pi_{\bm{\gamma}}}\left(\bm{h}(\bm{x})\leq \bm{0} \right) \leq 1-\delta.
    \end{split}
  \end{align}

\begin{rem}
    It is possible to employ a result similar to \Cref{lem:stationarymeasure} to reformulate the closed-loop long-term cost \eqref{eq:infinitehorizoncost} as a function of $\bm{\gamma}$, similarly to \eqref{eq:chanceconstraint_reform}. However, the resulting function is generally difficult to learn since the available measurements correspond to a stochastic process in a continuous space, indexed over $\bm{\gamma}$. A way of potentially tackling this would be to treat the measurements as noisy data, then model the expected value using, e.g., a parametric model.
\end{rem}

Our approach, discussed in \Cref{section:onlineupdate}, iteratively learns an approximation of $\riskfun$, then updates the constraint-tightening parameters by minimizing a weighted sum subject to the estimated chance constraints. 
Doing so while providing rigorous guarantees for the chance constraints is impossible without further assumptions. Hence, we make assumptions regarding $\riskfun$ and the space of constraint-tightening parameters $\Gamma$, which we explain in the following.

We assume that the space of constraint-tightening parameter $\Gamma$ is compact and contains parameters that strictly satisfy the chance constraints.

\begin{ass}
\label{ass:compactGamma}
    The space of constraint-tightening parameters $\Gamma$ is compact. Furthermore, there exists a constant $\varepsilon_{\text{feas}}>0$, such that the chance constraint are strictly satisfied with margin $\varepsilon_{\text{feas}}$, for some $\bm{\gamma}$ in $\Gamma$ i.e., \[\ \exists \bm{\gamma} \in \Gamma: \quad \riskfun(\bm{\gamma}) > 1-\delta+\varepsilon_{\text{feas}}.\]
\end{ass}
\Cref{ass:compactGamma} is not very restrictive, since $\Gamma$ can still be very large, and $\varepsilon_{\text{feas}}$ very small. It implies the problem is well posed, as the chance constraints can be satisfied strictly. In practice, choosing $\Gamma$ such that \Cref{ass:compactGamma} is satisfied can be achieved, e.g., by choosing $\Gamma$ such that very high values for $\bm{\gamma}$ are permissible, as high values typically lead to more conservative control inputs and fewer constraint violations. Alternatively, if $\Gamma$ does not satisfy \Cref{ass:compactGamma} initially, we can use our approach to estimate the maximal probability of constraint satisfaction and use this information to increase $\Gamma$.

We also make the following assumption regarding $\riskfun$.

\begin{ass}
\label{ass:rkhsnorm}
 There exists a known Lipschitz continuous sigmoid function $s: \mathbb{R}\rightarrow [0,1]$ and an unknown function $q^*: \Gamma \rightarrow \mathbb{R}$, such that the following conditions hold for all $\bm{\gamma} \in \Gamma$.
 \begin{itemize}
    \item The composition of $s$ and $q^*$ yields the long-term probability of constraint satisfaction, i.e., \[
    \riskfun(\bm{\gamma})= s\left( q^*(\bm{\gamma})\right).\]
     \item The function $q^*(\cdot)$ belongs to the reproducing kernel Hilbert space (RKHS) with reproducing kernel
     \begin{equation}
	\label{eq:kernel}
	k(\bm{\gamma},\bm{\gamma}') = \psi^{-1} k_0(\lambda \bm{\gamma},\lambda \bm{\gamma}'),
\end{equation}
where $k_0:\mathbb
{R}^{d_{\gamma}}\times \mathbb
{R}^{d_{\gamma}}\rightarrow \mathbb{R}_+$ is a nonsingular kernel, the reciprocal signal variance $\psi  \in \mathbb{R}_+$ scales the kernel, and the reciprocal lengthscale $\lambda  \in \mathbb{R}_+$ determines the bandwidth of $k(\cdot,\cdot)$. Furthermore, $k(\bm{\gamma}, \cdot)$ has
continuous partial derivatives up to order $2\alpha + 2$ for some $\alpha \in \mathbb{N}$.
 \end{itemize}
\end{ass}

For most functions $H$, we can easily find functions $q^*$ and $s$ that satisfy the first requirement of \Cref{ass:rkhsnorm}. For example, whenever $H$ assumes values strictly between $0$ and $1$, any invertible sigmoid function $s$ can be employed, e.g., the logistic function, yielding $q^*(\bm{\gamma}) = s^{-1}(H(\bm{\gamma}))$. The second requirement of \Cref{ass:rkhsnorm} restricts $q^*$ to a RKHS, specified by a kernel of the form \eqref{eq:kernel}. This assumption is not very restrictive, since the RKHSs with reproducing kernels that can be expressed as in \eqref{eq:kernel} and have derivatives up to order $2\alpha + 2$ is very rich. Examples of kernels that satisfy this requirement include the squared-exponential or Matérn kernels, which can approximate continuous functions uniformly and arbitrarily accurately in compact spaces \cite{micchelli2006universal}. Note that \Cref{ass:rkhsnorm} is considerably less restrictive than assuming a parametric structure for $q^*(\cdot)$, which would yield a significantly less expressive function space.


\section{Learning Probability of Constraint Satisfaction}
\label{sect:gaussprocforclass}

In order to choose constraint-tightening parameters $\bm{\gamma}$ that satisfy the chance constraints \eqref{eq:chanceconstraint_reform}, we aim to learn an approximation $\hat{\riskfun}_{\stepupdatei}$ of $\riskfun$ using binary regression and $n$ measurements collected during control up to time ${\stepupdate}_i$, where the number of collected data is potentially smaller than the number of time steps, i.e., $n\leq \step_i$. We achieve this using \glspl{gp}, which we introduce in this section. The idea behind learning $\riskfun$ using binary regression is shown in \Cref{fig:learning_intuition}.




\begin{figure}[t]
\centering
\includegraphics[width = 0.99\columnwidth]{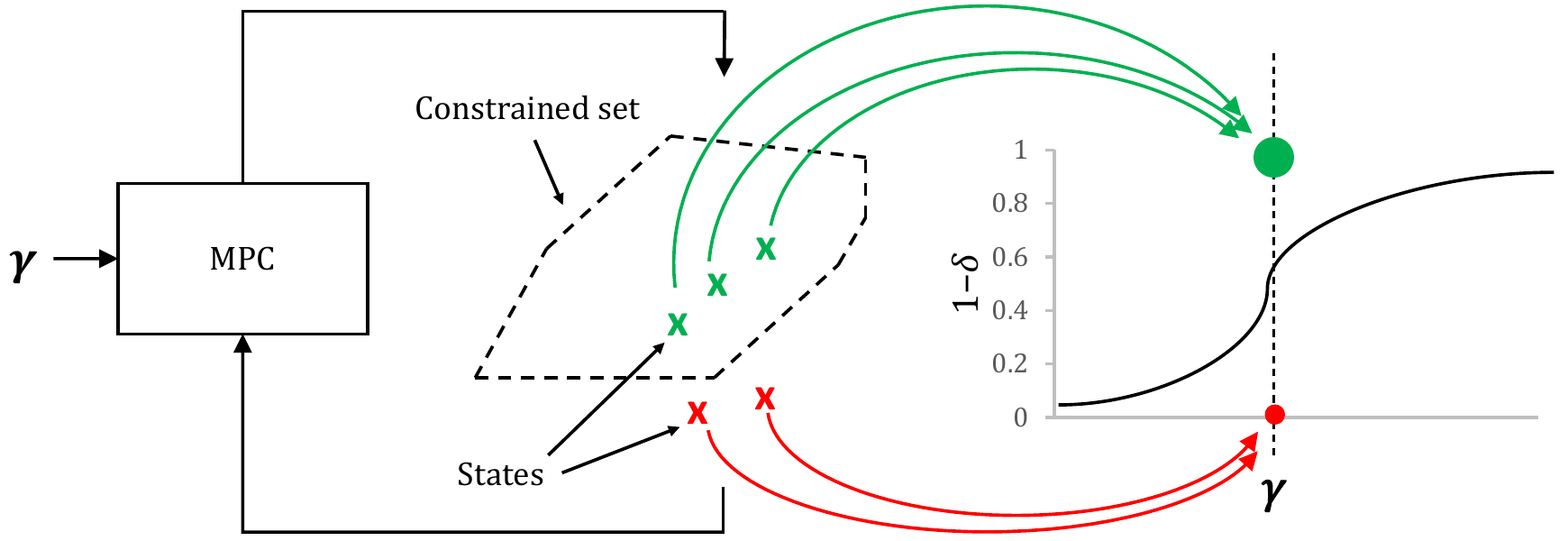}
\caption{Sketch of binary regression-based learning. The left-hand side of the figure shows the closed-loop system given $\gamma$, which results in states that are either within the constrained set (green) and outside of it (red). The goal is to learn a mapping from the constraint-tightening vector $\bm{\gamma}$ to the corresponding probability of constraint satisfaction. To this end, the number of constraint satisfactions (green) and violations (red) given $\bm{\gamma}$ are used as measurement data. In other words, $\bm{\gamma}$ is used as a training input, whereas instances where constraints are satisfied (green) and violated (red) are assigned zeroes and ones, respectively, and used as training targets. The training data is then used to fit a \gls{gp} binary regression model, shown on the right-hand side of the figure.}
\label{fig:learning_intuition}
\end{figure}

A \gls{gp} is an infinite collection of random variables, of which any finite subset is jointly normally distributed \cite{Rasmussen2006}. It is fully specified by a prior mean, which we set to zero without loss of generality, and positive-definite kernel function $k: \Gamma \times \Gamma \mapsto \mathbb{R}$, which we assume to have the same form as~\eqref{eq:kernel}. 
A \gls{gp} $q_{\text{GP}}$ is then a random variable, such that any finite number of evaluations $q_{\text{GP}}(\bm{\gamma}_1),\ldots,q_{\text{GP}}(\bm{\gamma}_n)$ corresponds to a multivariate Gaussian random variable
\begin{align}
\label{eq:normalgaussianprocess}
    \left(q_{\text{GP}}(\bm{\gamma}_1),\ldots,q_{\text{GP}}(\bm{\gamma}_n)\right)^{\top} \sim \mathcal{N}\left(\bm{0}, \bm{K}\right)
\end{align}
with covariance matrix entries $\left[\bm{K}\right]_{ij} = k(\bm{\gamma}_i, \bm{\gamma}_j)$. In order to obtain an expressive function space, we additionally assign prior probability density functions $p_{\psi}$ and $p_{\lambda}$ to $\psi$ and $\lambda$, respectively, as opposed to assuming fixed values for $\psi$ and $\lambda$. This way, we can express our belief about the smoothness and amplitude of $q_{\text{GP}}(\cdot)$ without completely ruling out other possibilities. This paper does not assume a specific form for $p_{\psi}$ and $p_{\lambda}$. However, we require that the tails satisfy the following conditions.

\begin{ass}
\label{ass:decayofsequences}
    Let $\psi_j$ and $\lambda_j$, $j\in \mathbb{N}$ be sequences of positive scalars, such that the priors on $\psi$ and $\lambda$ satisfy
    \begin{align*}
        \int\limits_{0}^{\psi_j} p_{\psi}(\psi) d\psi \leq \exp(-cj), \quad \int\limits_{\lambda_j}^{\infty} p_{\lambda}(\lambda) d\lambda \leq \exp(-cj),
    \end{align*}
    for some $c>0$. Furthermore, there exist $b_1$, $b_2\in \mathbb{R}_+$, and a sequence of positive scalars $M_j$, $j \in\mathbb{N}$, such that
    \begin{align*}
        M_j^2\psi_j\lambda_j^{-2} \geq b_1 j \quad \text{and} \quad M_j^{{ d_{\gamma}}{\alpha}^{-1}} \leq b_2 j,
    \end{align*}
    where $\alpha$ is chosen as in \Cref{ass:rkhsnorm}.
\end{ass}

\Cref{ass:decayofsequences} is not restrictive, as the priors $p_{\psi}$ and $p_{\lambda}$ are purely design choices. It dictates how fast $p_{\psi}$ and $p_{\lambda}$ have to decay relative to each other. \Cref{ass:decayofsequences} is satisfied, e.g., by priors with exponentially decaying tails \cite{ghosal2006posterior}. In rough terms, \Cref{ass:decayofsequences} limits how fast the posterior can change as new measurements are collected, which is useful for establishing convergence of the learned model. 


The idea behind \gls{gp}-based binary regression is to squash the \gls{gp} through a sigmoid function $s: \mathbb{R} \rightarrow [0,1]$, obtaining the prior probability associated with the value of a binary random variable
\begin{align}
\label{eq:sigmoidprior}
    p( y=1 \vert \bm{\gamma}) = s(q_{\text{GP}}(\bm{\gamma})).
\end{align}
This way, \gls{gp} evaluations with a high positive value are assigned a high probability and vice versa. 
Given training data \[\mathcal{D}_{\stepupdatei} = \left\{\bm{\gamma}_j, y_j \right\}_{j=1,\ldots,n} \eqqcolon\{\bm{G},\bm{y}\}\] 
with binary training labels $y_j \in \{0,1\}$ and fixed hyperparameters $\psi$, $\lambda$, the \gls{gp} binary regression model is then conditioned on $\mathcal{D}_{\stepupdatei}$, yielding the predictive model
\begin{align}
    p(y = 1 \vert \mathcal{D}_{\stepupdatei}, \bm{\gamma}) = \int \limits_{\mathbb{R}} s(q_{\text{GP}})p(q_{\text{GP}}\vert \mathcal{D}_{\stepupdatei}, \bm{\gamma}) dq_{\text{GP}},
\end{align}
where $\bm{\gamma}$ denotes the test input. The posterior distribution of the latent variable $q_{\text{GP}}(\bm{\gamma})$ given $\bm{\gamma}$ is computed as \cite{Rasmussen2006}
\begin{align}
\label{eq:prediction}
    p(q_{\text{GP}}\vert \mathcal{D}_{\stepupdatei}, \bm{\gamma}) = & \int \limits_{\mathbb{R}^n} p(q_{\text{GP}} \vert \bm{\gamma}) p(\bm{q}_{\text{GP}} \vert \mathcal{D}_{\stepupdatei}) d\bm{q}_{\text{GP}},
    \end{align}
    where $\bm{q}_{\text{GP}}\coloneqq \left(q_{\text{GP}}(\bm{\gamma}_1),\ldots,q_{\text{GP}}(\bm{\gamma}_\step)\right)^{\top}$, the posterior of the latent variable values is given by
    \begin{align*}
    p(\bm{q}_{\text{GP}} \vert \mathcal{D}_{\stepupdatei}) = &  \frac{p(\bm{y}\vert\bm{q}_{\text{GP}})p(\bm{q}_{\text{GP}}\vert \bm{G})}{p(\bm{y}\vert \bm{G})},
\end{align*}
and the terms $p(\bm{y}\vert\bm{q})$, $p(\bm{q}\vert \bm{G})$, $p(\bm{y}\vert \bm{G})$ are in turn computed using \eqref{eq:normalgaussianprocess} and \eqref{eq:sigmoidprior}. To choose the hyperparameters $\psi$ and $\lambda$, we can sample from the corresponding posterior given the data
\begin{align}
	\label{eq:posterior}
	p(\psi,\lambda \vert \mathcal{D}_{\stepupdatei}) = \frac{p(\bm{y} \vert \bm{G},\psi,\lambda) p(\psi) p(\lambda)}{p(\bm{y}\vert \bm{G})}.
\end{align}
Alternatively, the hyperparameters $\psi$ and $\lambda$ can be chosen by maximizing the posterior \eqref{eq:posterior}.

\begin{rem}
In practice, \eqref{eq:prediction} typically does not have an analytical solution, and we have to resort to numerical approximations, e.g., Laplace's approximation or Markov Chain Monte Carlo. For the latter approach, convergence guarantees can be obtained; see, e.g., \cite{haggstrom2002finite}. 
\end{rem}

\begin{figure*}
\centering
\begin{adjustbox}{width=0.9\textwidth}
\begin{tikzpicture}[auto, thick, node distance=2cm, >=triangle 45]
	\draw
		node at (6,-1)[draw, thick, rectangle,minimum height = 1.6cm, name=gpblock,text width=3.5cm,align=center]{\centering {Train GP model} 

		{\hspace{-0.5cm}\begin{tikzpicture}
			\begin{axis}[
				axis on top = true,
				axis x line = bottom,
				axis y line = left,
				grid = none,
				ticks=none,
				width=4cm,
				height=2.3cm,
				xlabel shift=-0.4cm,
				ylabel shift=-0.4cm,
				xlabel =  $\bm{\gamma}$,
				ylabel =  $\hat{\riskfun}_{{\stepupdate_i}}(\bm{\gamma})$
				]
			\addplot[
				blue,
				domain = -8:8,
				samples = 100
				]
				{sin(x/0.01)/12 + 1/(1+exp(-x))};
			\end{axis}
		\end{tikzpicture}
	\vspace{-0.5cm}
	}
	}
	;
	\draw		node at (13,-1)[draw, thick, rectangle,minimum height = 2.5cm, name=constraintblock,text width=6.5cm,align=center]{\centering Update data 
		
		$\mathcal{D}_i = \mathcal{D}_i \cup \left\{\bm\gamma_{\stepupdatei},y_{\step}\right\}$
  
    \vspace{2pt}
		$y_{\step} = \mathbb{I}_{\mathbb{R}_+} \Big( \bm{h}(\bm{x}_{\step}) \Big)$

  $ {\stepupdate_i}+ \Twait(\bm{x}_{\stepupdate_i}) \leq \step < {\stepupdate_i}+ \Twait(\bm{x}_{\stepupdate_i})+ \Tcollect$
	}
	;
	\draw
	node at (0,-1)[draw, thick, rectangle,minimum height = 2cm, name=tighteningblock,text width=4cm,align=center]{Tighten constraints
 \vspace{4pt}
	\begin{tabular}[t]{rl}
	$\min\limits_{\bm{\gamma}}$& $\bm{a}^{\top} \bm{\gamma},$\\
	$\text{s.t.}$& $\hat{\riskfun}_{\stepupdatei}(\bm{\gamma}) \geq  1 - \delta$
\end{tabular}
	}
	;
	\draw
	node at  (4,-6.5) [draw, thick, rectangle,minimum height = 3cm, name=mpcblock,text width=6.5cm,align=center] { Model predictive control law
		
		{\tiny

\begin{align*}
	\IEEEyesnumber
	\min_{\bm{U}_{N -1\vert \step}}  \ &  \rlap{$\sum\limits_{\tau=0}^{N-1} l \left(\bm{x}_{\tau\vert \step},\bm{u}_{\tau \vert \step}\right) + V_{\text{f}}\left(\bm{x}_{N\vert \step}\right)+ \Big\Vert \bm{s}_{i,\tau}^B \Big\Vert_2^2$}  &\qquad \qquad   \\
	\text{s.t. } &  \rlap{$\bm{x}_{\tau+1\vert \step} = \bm{A} \bm{x}_{\tau\vert \step} + \bm{B} \bm{u}_{\tau \vert \step}$,}& \IEEEyessubnumber \\
	& \bm{u}_{\tau\vert \step} \in \mathcal{U} ,\IEEEyessubnumber\\
	& \bm{h}\left(\bm{x}_{\tau\vert \step} \right) \leq   - {\color{red}\bm{g}_{\stepupdatei,\tau}} + \bm{s}_{\tau}^B & \\ 
 &  \bm{x}_{0\vert \step} = \bm{x}_{ \step} \\
 & \bm{0} \leq \bm{s}_{\tau}^B, \qquad \forall \ \tau\leq B \\
 & \bm{0} = \bm{s}_{\tau}^B, \qquad \forall \ \tau > B
 \IEEEyessubnumber 
\end{align*}
  
 }}
	node at (12,-6.5) [draw, thick, rectangle,minimum height = 2.5cm,text width=3.5cm,align=center] (plantblock) {Partially Unknown Plant

	{$\bm{x}_{\step+1} = f(\bm{x}_{\step}, \bm{u}_{\step}, \bm{w}_{\step})$}}
	;
	\draw
	node at (15.5,-6.55)(bullet1){\Large\textbullet}
	node at (0,-3.07)(bullet2){\textcolor{red}{\Large\textbullet}}
	;
	\draw[->] (mpcblock) -- node {$\bm{u}_{\step}$} (plantblock);
	;
 \draw[-] (plantblock) -- node {$\bm{x}_{\step}$}(15.5,-6.5)(bullet1);
	\draw[->] (bullet1.south) -- (15.5,-7) -- (15.5, -9) -| node {}(mpcblock);
  \draw[->] (bullet1.north)-- (15.5,-7) -- (constraintblock.south -| bullet1);
	\draw[->] (constraintblock.west) -- node {$\mathcal{D}_{\stepupdatei}$}(gpblock.east);
	\draw[->,color=red] (tighteningblock.south) --node {$\bm{\gamma}_{\stepupdatei}$}(0,-3)--(0,-6.5) -- (mpcblock.west);
	\draw[->,color=red] (tighteningblock.south) --(0,-2)--(0,-3) -- (12.5,-3)-|(constraintblock.south);
	\draw[->] (gpblock.west) -- node {$\hat{\riskfun}_{\stepupdatei}$}(tighteningblock.east);
	\draw [color=black,dashed](-2.5,-3.8) rectangle (17,1);
	\node at (-1.7,0.25) [above=5mm, right=0mm] {\textsc{Binary Regression-Based Constraint Tightening}};
	\draw [color=black,dashed](-2.5,-10) rectangle (17,-4.3);
	\node at (-1.7,-9.2) [below=5mm, right=0mm] {\textsc{MPC control law and plant}};
\end{tikzpicture}
\end{adjustbox}
	\caption{Closed-loop control with constraint-tightening.}
\label{fig:closedlooptightening}
\end{figure*}
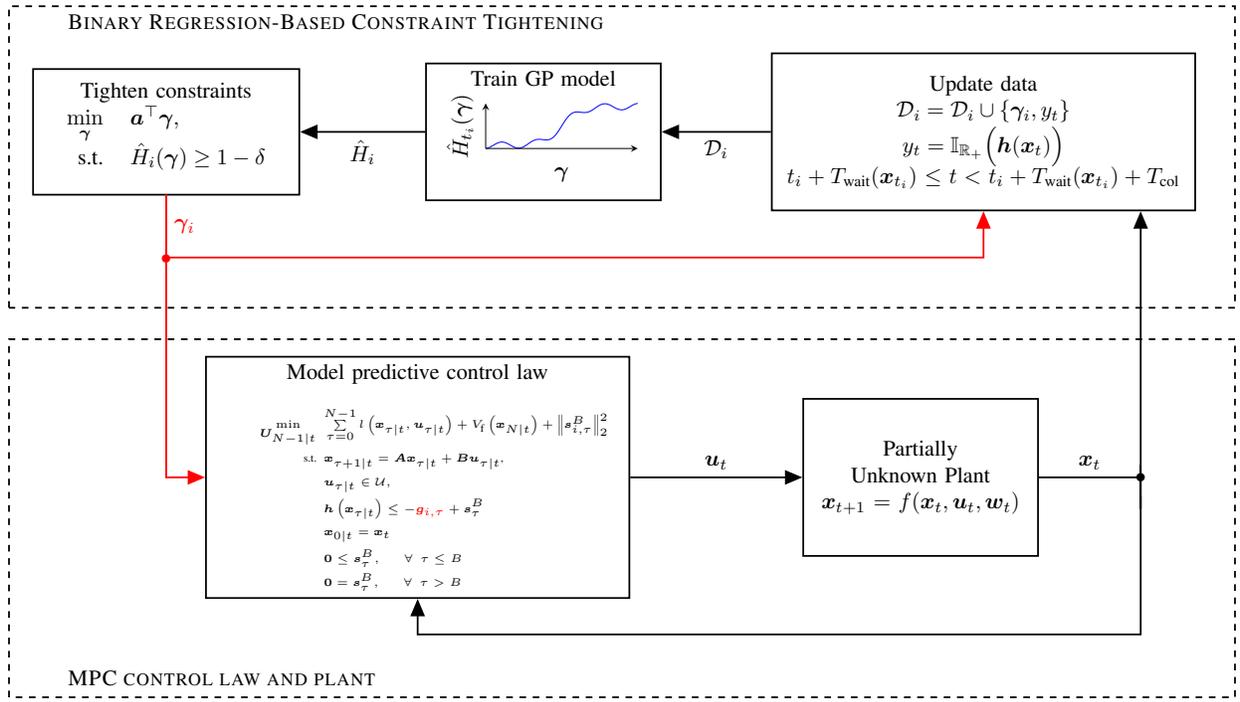

\section{Online Update of Constraint Parameters}
\label{section:onlineupdate}

We now present our approach, followed by corresponding theoretical guarantees. The goal of our approach is to find a solution to the optimization problem
\begin{align}
	\label{eq:gammasearch}
		\min\limits_{\bm{\gamma} \in \Gamma} \  {\bm{a}}^{\top} \bm{\gamma}, \qquad 
		\text{s.t.} \  {\riskfun}(\bm{\gamma}) \geq 1 - \delta,
\end{align}
where ${\bm{a}} \in \mathbb{R}_+$ is a vector of positive weights. This choice of cost function is motivated by the fact that if the entries of $\bm{\gamma}$ are small, the feasible region of \eqref{eq:deterministicocp} is large, leading to potentially more effective control inputs. Our method applies the \gls{mpc} control law \eqref{eq:closedloopcontrol} using $\bm{\gamma}_{i}$, which in turn is updated at time steps $\step_i$, specified in the following. To this end, our approach employs a \gls{gp}-based approximation $\hat{\riskfun}_i$ of $\riskfun$, trained using collected measurements. This is repeated until $\Tfinal$ data points have been collected, after which a final optimization step is performed to determine the approximately optimal constraint-tightening parameters. 

Given the state $\bm{x}_{\step}$ and a vector $\bm{\gamma}_{\stepupdatei}$ of constraint-tightening parameters, our method applies the \gls{mpc} control law \eqref{eq:closedloopcontrol} using $\bm{\gamma}_{\stepupdatei}$. 
To guarantee that the collected data is representative of the long-term closed-loop behavior, $\bm{\gamma}_{\stepupdatei}$ is kept constant for $\Twait(\bm{x}_{\stepupdate_i})$ time steps, where $\bm{x}_{\stepupdate_i}$ corresponds to the state when $\bm{\gamma}_{\stepupdatei}$ was last updated, and $ \Twait :\mathcal{X} \rightarrow \mathbb{R}_+ $ is a function to be specified in the following. The reason for waiting $\Twait(\bm{x}_{\stepupdate_i})$ times steps is that it ensures that the distribution of the state is sufficiently close to that of the stationary distribution $\pi_{\bm{\gamma}_{\stepupdatei}}$ when we start collecting data. After waiting $\Twait(\bm{x}_{\stepupdate_i})$ steps, our approach then collects  $\Tcollect$ measurements before updating $\bm{\gamma}_{\stepupdatei}$. The role of $\Tcollect$ is threefold. First, it ensures that we obtain sufficient information about the distribution of $\mathbb{I}_{\mathbb{R}_-^{d_{\text{c}}}}\left(\bm{h}(\bm{x}_{\step})\right)$ for each training input. Secondly, it limits the frequency with which $\bm{\gamma}_{\stepupdatei}$ is updated, which can be leveraged to reduce the overall computational load. Lastly, it allows us to keep collecting data without restarting the procedure, which would require us to wait until the new stationary distribution given the updated vector $\bm{\gamma}_{\stepupdatei}$ has been approximately reached. After $\Tcollect$ measurements have been collected, the data collected up until the current time step is used to learn a \gls{gp}-based approximation  $\hat{\riskfun}_{{i}}$ of $\riskfun$, as described in \Cref{sect:gaussprocforclass}. The constraint-tightening parameters $\bm{\gamma}_{\stepupdatei+1}$ are then updated by approximately solving the optimization problem
\begin{align}
	\label{eq:gammasearchgp}
		\bm{\gamma}_{{i+1}}=\argmin\limits_{\bm{\gamma} \in \Gamma} \  {\bm{a}}^{\top} \bm{\gamma}, \qquad 
		\text{s.t.} \  \hat{\riskfun}_{{i}}(\bm{\gamma}) \geq 1 - \delta,
\end{align}
where ${\bm{a}} \in \mathbb{R}_+$ is a vector of positive weights. This choice of cost function is motivated by the fact that if $\bm{\gamma}_i$ is small, the feasible region of \eqref{eq:deterministicocp} is large, leading to potentially more effective control inputs. In addition to the optimization-based update step \eqref{eq:gammasearchgp}, we also include a random update step that takes place every $\crandom$ updates, where $\crandom \in \mathbb{N}\backslash \{0\}$ is arbitrary but fixed, i.e., whenever
\[\step = {\stepupdate_{i}}, \quad i=n\crandom, \quad n\in \mathbb{N},\]
and whenever \eqref{eq:gammasearchgp} is infeasible. The role of random updates is to ensure sufficient coverage of the constraint-tightening parameter space $\Gamma$. This procedure is repeated until $\Tfinal$ data points have been collected, after which we choose the final vector of constraint-tightening parameters from all previously collected inputs
\begin{align}
	\label{eq:gammasearchgp_final}
		\gammaopt =\argmin\limits_{\bm{\gamma} \in \left\{\bm{\gamma}_1, \ldots, \bm{\gamma}_{\Tfinal} \right\}} \  {\bm{a}}^{\top} \bm{\gamma}, \qquad 
		\text{s.t.} \  \hat{\riskfun}_{{\Tfinal}}(\bm{\gamma}) \geq 1 - \delta.
\end{align}
We require this final step because the class of functions that satisfies \Cref{ass:rkhsnorm} is very general: our model $\hat{\riskfun}_i$ converges to the true function $\riskfun$ over all collected data points $\bm{\gamma}_1, \ldots, \bm{\gamma}_i$, but we cannot generally exclude pathological cases where this does not imply uniform convergence, e.g., if the Lipschitz constant of $\hat{\riskfun}_i$ grows unbounded. Although this step poses no practical issues, it is not required, e.g., if the Lipschitz constant of $\hat{\riskfun}_i$ is bounded for all $i$ or if $\Gamma$ is finite. 
These steps are summarized in \Cref{alg:onlineupdatealgorithm}. An illustration without the random update step and final optimization step \eqref{eq:gammasearchgp_final} is given in \Cref{fig:closedlooptightening}.

\begin{algorithm}[t]
\caption{SMPC with Online Constraint Parameter Update}
\label{alg:onlineupdatealgorithm}
\begin{algorithmic}[1]
 \renewcommand{\algorithmicrequire}{\textbf{Input:}}
 \renewcommand{\algorithmicensure}{\textbf{Output:}}
 \REQUIRE{Initial parameter $\bm{\gamma}_0$, \gls{gp} model $\hat{\riskfun}_0$, desired risk $\delta$, search space $\Gamma$ }
 \ENSURE{Constraint-tightening parameters $\gammaopt$}
  \STATE Initialize data set $\mathcal{D}_0 = \{\}$
  \STATE Set ${\stepupdate_i}=0$, set $\bm{x}_{{\stepupdate_i}}=\bm{x}_{0}$
  \WHILE{$i< \Tfinal$}
  \STATE Apply $\bm{u}^{\text{MPC}}(\bm{x}_{\step},{\bm{\gamma}_{\stepupdatei}})$ to system using \eqref{eq:closedloopcontrol}, measure $\bm{x}_{\step+1}$
    \STATE Update time step $\step=\step+1$ \vspace{-0.1cm}
  \IF{$ {\stepupdate_i}+ \Twait(\bm{x}_{\stepupdate_i}) \leq \step < {\stepupdate_i}+ \Twait(\bm{x}_{\stepupdate_i})+ \Tcollect$}
  \vspace{0.1cm}
   \STATE Collect data and update data set \[\mathcal{D}_{\stepupdatei} = \mathcal{D}_{\stepupdatei} \cup \left\{\bm{\gamma}_{\stepupdatei}, \mathbb{I}_{\mathbb{R}_-} \left( \bm{h}(\bm{x}_{\step}) \right)\right\}\] 
  \ELSIF{$ \step = {\stepupdate_i}+ \Twait (\bm{x}_{\stepupdatei}) + \Tcollect$}
  \vspace{0.1cm}
  \STATE Update $i=i+1$, $\stepupdate_i=\step$, $\bm{x}_{{\stepupdate_i}}=\bm{x}_{{\step}}$
  \vspace{0.1cm}
  \STATE Use $\mathcal{D}_{\stepupdatei}$ to train \gls{gp} model $\hat{\riskfun}_{{\stepupdatei}}$
  \IF{$i=n\crandom, \quad n\in \mathbb{N}$ or \eqref{eq:gammasearchgp} is infeasible}
  \STATE Sample $\bm{\gamma}_{\stepupdatei}$ from uniform distribution on $\Gamma$
  \ELSE
   \STATE Update $\bm{\gamma}_{\stepupdatei}$ by solving \eqref{eq:gammasearchgp}
  \ENDIF
  \ENDIF
  \ENDWHILE
  \STATE Choose $\gammaopt$ by solving \eqref{eq:gammasearchgp_final}
 \end{algorithmic} 
 \end{algorithm}

\begin{rem}
    The method proposed in this paper can be directly extended to the episodic setting, where a finite-horizon OCP problem is addressed, provided that the initial state distribution is \gls{iid}. In this case, we can employ \Cref{alg:onlineupdatealgorithm} by setting $\Twait(\bm{x})=0$ for all $\bm{x}\in \mathcal{X}$ and $\Tcollect$ equal to the episode horizon. The only difference is that the state $\bm{x}_{\step}$ is reset every $\Tcollect$ steps according to the initial state distribution. The equivalent of the stationary distribution $\pi_{\bm{\gamma}}$ is then the joint distribution of the states within the entire horizon $\Tcollect$.
\end{rem}

We now show that if we choose $\Tcollect$, $\Tfinal$ and the function $\Twait$ high enough, the approximately optimal parameter $\gammaopt$ satisfies the chance constraints \eqref{eq:chanceconstraints} up to an arbitrarily small margin. This corresponds to our main theoretical result and is stated in the following.

\begin{thm}
\label{theorem:mainresult}
Let \Cref{ass:stableloop,ass:rkhsnorm,ass:compactGamma,ass:decayofsequences}
hold. For any $\Tfinal$, choose $\Twait$ and $\Tcollect$ such that
    \begin{align}
    \label{eq:stepslowerbound}
        \Twait(\bm{x}) &\geq \frac{1}{-\log(\varphi)}\left( \log\left({\vartheta}V(\bm{x})\right) + \Tfinal \log\left(2\right) \right),  \\
    \label{eq:collectionstepslowerbound}
    \Tcollect &\geq \ccollect \Tfinal,
    \end{align}
where $\varphi$ and $\vartheta$ are as in \Cref{lem:stationarymeasure}, and $ \ccollect>0$ is an arbitrary but fixed parameter. Then, for any $\marginerror>0$ and $\marginprobability \in (0,1)$, there exists a $\Tfinal\in \mathbb{N}$, such that, with probability at least $1-\marginprobability$, \eqref{eq:gammasearchgp_final} is feasible and the corresponding solution satisfies
\begin{align}
{\riskfun}(\gammaopt)  \geq 1-\delta -\marginerror.
\end{align}
\end{thm}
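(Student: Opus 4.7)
The plan is to reduce the statement to three separable ingredients that, taken together, imply the claim: (a) each data point $(\bm{\gamma}_{\stepupdate_i}, y_\step)$ is effectively a Bernoulli draw with mean arbitrarily close to $\riskfun(\bm{\gamma}_{\stepupdate_i})$; (b) the posterior-mean predictor $\hat{\riskfun}_I$ converges to $\riskfun$ at the collected inputs as $\Tfinal \to \infty$; and (c) the random update step populates $\{\bm{\gamma}_1,\ldots,\bm{\gamma}_\Tfinal\}$ densely enough in $\Gamma$ that some collected input lies near a strictly feasible point certified by \Cref{ass:compactGamma}. These three ingredients are then combined through the final optimization step \eqref{eq:gammasearchgp_final}.

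For (a), I would use \Cref{lem:stationarymeasure} together with the lower bound \eqref{eq:stepslowerbound} on $\Twait(\bm{x})$. The choice of $\Twait$ is calibrated so that for every collection window $i \le \Tfinal$, the probability of constraint satisfaction at collection time differs from $\riskfun(\bm{\gamma}_{\stepupdate_i})$ by at most $\vartheta V(\bm{x}_{\stepupdate_i})\varphi^{\Twait(\bm{x}_{\stepupdate_i})} \le 2^{-\Tfinal}$. Summing these errors over the $\Tfinal$ collection rounds contributes at most a vanishing term (in $\Tfinal$) to the total variation between the actual sample distribution and the ideal Bernoulli$(\riskfun(\bm{\gamma}_{\stepupdate_i}))$ distribution. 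A Markov-chain concentration / coupling argument handles the correlation between the $\Tcollect$ consecutive measurements within a window, using the minorization in condition \ref{condition:minirization} of \Cref{ass:stableloop}; the lower bound \eqref{eq:collectionstepslowerbound} ensures that $\Tcollect$ grows with $\Tfinal$ so the empirical frequency of constraint satisfaction is close to $\riskfun(\bm{\gamma}_{\stepupdate_i})$ with high probability.

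The main obstacle is (b): posterior consistency of \gls{gp} binary regression under the hierarchical prior on $(\psi,\lambda)$. This is where \Cref{ass:rkhsnorm} and \Cref{ass:decayofsequences} together with a result in the style of Ghosal-van der Vaart / Ghosal-Roy on adaptive rates for rescaled Gaussian process priors enter: for $q^*$ in the RKHS of a kernel of the form \eqref{eq:kernel} with $2\alpha+2$ derivatives, the hyperprior tail conditions on $p_\psi, p_\lambda$ guarantee that the posterior of $q_{\text{GP}}$ concentrates around $q^*$ at a polynomial rate in the Hellinger metric on the predictive densities $p(y=1\mid\bm{\gamma})$. I would invoke this result, adapted to the non-i.i.d. observation model by the closeness established in (a), to obtain $|\hat{\riskfun}_\Tfinal(\bm{\gamma}_j) - \riskfun(\bm{\gamma}_j)| \le \marginerror/2$ uniformly over $j \le \Tfinal$ with probability at least $1-\marginprobability/2$, for $\Tfinal$ large enough. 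The care needed here is that consistency in an integrated metric is translated to a pointwise statement only at the sampled inputs $\bm{\gamma}_j$, which is precisely why the final optimization \eqref{eq:gammasearchgp_final} restricts to the collected set rather than to all of $\Gamma$.

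Finally, for (c), since a random update is performed every $\crandom$ iterations, by \Cref{ass:compactGamma} there is a $\bm{\gamma}^\dagger \in \Gamma$ with $\riskfun(\bm{\gamma}^\dagger) > 1-\delta+\varepsilon_{\text{feas}}$, and by Lipschitz continuity of $s$ combined with continuity of $q^*$ a neighborhood of $\bm{\gamma}^\dagger$ with positive volume also satisfies $\riskfun(\bm{\gamma}) > 1-\delta+\varepsilon_{\text{feas}}/2$. Choosing $\Tfinal$ large enough forces the probability that no random sample falls into this neighborhood to be below $\marginprobability/2$. Then, combining with (b) and choosing $\marginerror < \varepsilon_{\text{feas}}/2$ in the intermediate step, the estimate $\hat{\riskfun}_\Tfinal$ at that random sample exceeds $1-\delta$, so \eqref{eq:gammasearchgp_final} is feasible. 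Applying (b) at the minimizer $\gammaopt$ then yields $\riskfun(\gammaopt) \ge \hat{\riskfun}_\Tfinal(\gammaopt) - \marginerror \ge 1-\delta - \marginerror$, completing the argument via a union bound over the two failure events.
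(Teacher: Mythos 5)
Your proposal follows the same architecture as the paper's proof: use \Cref{lem:stationarymeasure} and the bound \eqref{eq:stepslowerbound} to argue that the collected labels are nearly stationary Bernoulli draws, invoke Ghosal--Roy-type posterior consistency for the \gls{gp} binary regression model under \Cref{ass:rkhsnorm,ass:decayofsequences}, use the periodic random updates together with the margin $\varepsilon_{\text{feas}}$ from \Cref{ass:compactGamma} to certify feasibility of \eqref{eq:gammasearchgp_final}, and close with a union bound. Your parts (a) and (c) are essentially the paper's Lemma~4 and Lemma~5 (indeed, in (a) you are more careful than the paper about the correlation between consecutive measurements within a collection window).

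The genuine gap is in step (b), at the point where you pass from consistency ``in an integrated metric'' to the uniform pointwise bound $\vert \hat{\riskfun}_{\Tfinal}(\bm{\gamma}_j) - \riskfun(\bm{\gamma}_j)\vert \le \marginerror/2$ for \emph{all} sampled inputs. The consistency result available here (the paper's \Cref{lem:ghosal,lem:ghosal_extended,lem:recovered_bounds}) controls only the \emph{empirical average} $\frac{1}{\Tfinal}\sum_{n}\vert \riskfun(\bm{\gamma}_n)-\hat{\riskfun}_{\Tfinal}(\bm{\gamma}_n)\vert$, and a small average over $\Tfinal$ terms does not by itself bound any individual term. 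The missing ingredient --- and the actual reason the hypothesis $\Tcollect \geq \ccollect\Tfinal$ appears in the theorem --- is that each distinct constraint-tightening vector $\bm{\gamma}_i$ is repeated $\Tcollect$ times in the training set, i.e.\ it accounts for a \emph{constant fraction} $\ccollect$ of all data. Hence a single pointwise error of size $\marginerror$ forces the empirical average error above $\ccollect\marginerror>0$, which is excluded with high probability; this is exactly how the paper concludes. Your write-up instead attributes the role of \eqref{eq:collectionstepslowerbound} to within-window concentration of the empirical satisfaction frequency, for which it would suffice that $\Tcollect\to\infty$ at any rate; that reading does not repair the average-to-pointwise step (and concentration of the empirical frequency says nothing directly about the \gls{gp} posterior mean, which smooths across inputs through the prior). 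As stated, the assertion of a uniform pointwise bound in (b) is unsupported, and both the feasibility argument in (c) (which, unlike the paper's, relies on a single random sample landing in the feasible neighborhood rather than a positive fraction of samples) and the final inequality $\riskfun(\gammaopt)\geq \hat{\riskfun}_{\Tfinal}(\gammaopt)-\marginerror$ depend on it. Adding the repetition argument closes the gap and recovers the paper's proof.
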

\begin{proof}
    See Appendix - Proof of \Cref{theorem:mainresult}.
\end{proof}

\Cref{theorem:mainresult} states that if the number of collected data points $\Tfinal$ is high enough, then, with high probability $1-\marginprobability$, the final optimization step in \Cref{alg:onlineupdatealgorithm} is feasible, and we obtain a vector of constraint-tightening parameters $\gammaopt$ that satisfies the chance constraints \eqref{eq:chanceconstraints} 
up to an arbitrarily small margin $\marginerror$. However, to achieve this, we must choose the function that specifies the waiting time $\Twait$ and $\Tcollect$ and such that \eqref{eq:stepslowerbound} and \eqref{eq:collectionstepslowerbound} are satisfied, respectively. The bounds \eqref{eq:stepslowerbound} 
 and \eqref{eq:collectionstepslowerbound} grow linearly with $\Tfinal$. In the case of $\Tcollect$, this is because our model $\hat{\riskfun}_{\Tfinal}$ converges to the true function $\riskfun$ on average, hence ensuring that $\Tcollect$ corresponds to a fixed fraction $\ccollect$ of $\Tfinal$ allows us to recover pointwise bounds. The bound on $\Twait$ \eqref{eq:stepslowerbound} is directly tied to \Cref{lem:stationarymeasure}, which states that the data becomes more representative of the stationary measure $\pi_{\bm{\gamma}_i}$ of the closed-loop system as time increases. It implies that we can extract more information from a higher amount of data $\Twait$ if the corruption in the data is sufficiently low, which is an intuitive result.

\begin{rem}
    In practice, determining the parameters $\varphi$ and $\vartheta$ required to compute the bound \eqref{eq:stepslowerbound} for $\Twait$ can be difficult. Furthermore, the resulting bound can be conservative, stipulating a long waiting time before we can start collecting data. However, the bound \eqref{eq:stepslowerbound} corresponds only to a sufficient condition, not a necessary one, and in practice, a lower value can be potentially picked. In some cases, we can determine convergence to the stationary measure $\pi_{\bm{\gamma}_i}$ empirically, e.g., by checking whether the state has converged to a neighborhood of a specific point.
\end{rem}

\begin{rem}
    The bound \eqref{eq:collectionstepslowerbound}, the random update step, and the final optimization step in \Cref{alg:onlineupdatealgorithm} are all related to the nonparametric nature of our the \gls{gp}-based model. In the case of strictly parametric models, stronger convergence results can be obtained \cite{shen2001rates}, simplifying the requirements of \Cref{theorem:mainresult}. However, such a model is generally more difficult to justify than the nonparametric model used in this paper, as the corresponding function space is typically considerably less expressive.
\end{rem}

\begin{rem}
    Although we do not provide a result that establishes convergence of $\gammaopt$ to the optimum of \eqref{eq:gammasearch}, the same techniques employed in this paper can be employed to derive such a result, e.g., using Berger's maximum theorem \cite{aliprantis2006infinite}. However, to this end, we require the additional assumption that the subset of $\Gamma$ that satisfies $\riskfun(\bm{\gamma})<1-\delta$ is convex in $\delta$. In our experiments, we observed that $\riskfun$ is locally monotonically increasing, which is potentially sufficient. Furthermore, this property caused $\gammaopt$ to converge to the minimizer of \eqref{eq:gammasearch} in all experiments.
\end{rem}

\subsection{Discussion}
\label{section:discussion}

A clear strength of the proposed approach is that we can obtain guarantees while taking the backup strategy into account, 
which is of high practical relevance. It is also worth noting that other forms of backup controller are also allowed, provided that the closed-loop system satisfies \Cref{ass:stableloop}. 
Our approach 
could potentially also be used to inform such a choice in practice, e.g., by applying and comparing backup control strategies other than the slack variable-based approach presented here.



Our approach adapts the constraint-tightening parameter $\bm{\gamma}$ by exploring the parameter space $\Gamma$, where the observed constraint violations inform the exploration. Although our approach is guaranteed to converge a $\bm{\gamma}$ that satisfies the chance constraints due to \Cref{theorem:mainresult}, the corresponding algorithm might choose values of $\bm{\gamma}$ that will momentarily lead to a high rate of constraint violations. In cases where this is undesirable, it can be mitigated by adopting a safe strategy, e.g., by adaptively increasing the space of constraint tightening parameters $\Gamma$ such that initially only conservative values are permissible, then gradually allowing less conservative values.

One of the more notable challenges when employing \glspl{gp} is their poor scalability with the number of data points, as training scales cubically with the amount of data and evaluation scales quadratically \cite{Rasmussen2006}. This can be addressed using different frequently encountered tools, e.g., sparse GPs \cite{snelson2005sparse} or networks of local experts \cite{deisenroth2015distributed}.

If the \gls{mpc} horizon $N$ or the number of constraints $d_c$ are very high, then the dimension $d_{\gamma}$ of the constraint-tightening space $\Gamma$ is potentially large, making searching for an optimizer of \eqref{eq:gammasearchgp} difficult. However, this can be easily mitigated by setting 
\[\Gamma = \left\{ \bm{D} \tilde{\bm{\gamma}} \ \big \vert \ \tilde{\bm{\gamma}} \in \tilde{\Gamma} \right\}, \]
where $\bm{D}$ is a matrix and $\tilde{\Gamma}$ is a low-dimensional search space, then conditioning the \gls{gp} model on training inputs from $\tilde{\Gamma} $. This is exemplified in \Cref{section:validation}.

A further aspect of the proposed algorithm that should be considered is the initial guess for  $\bm{\gamma}_0$. Though \Cref{theorem:mainresult} requires a portion of the parameter space $\Gamma$ to be explored thoroughly, convergence can potentially be sped up significantly if a good initial vector of constraint-tightening parameters $\bm{\gamma}_0$ is provided, as this will decrease the size of the explored region significantly compared to a poor initial guess. 

\section{Numerical Example}
\label{section:validation}

In this section, we aim to illustrate \Cref{theorem:mainresult} and assess the performance of the proposed approach using a numerical experiment. Moreover, we compare our approach to different state-of-the-art approaches.


\subsection{Simulation Setup}
\label{subsection:simulationsetup}

We consider the discrete-time system corresponding to the linearized model of a DC-DC converter \cite{CannonEtalCheng2011,LorenzenEtalAllgoewer2017}, given by
\begin{equation}
\label{eq:simenvironment}
    \bm{x}_{\step+1} = \begin{bmatrix} 1 & 0.0075 \\ -0.143 & 0.996\end{bmatrix} \bm{x}_{\step} + \begin{bmatrix} 4.798 \\ 0.115\end{bmatrix} u_t + \bm{w}_{\step},
\end{equation}
where we consider an initial state of  $\bm{x}_0=(0,0)^\top$, chance constraints \[\Pr\left( [1 \ \ 0]\bm{x}_{\step}\leq 0 \right) \geq 1-\delta, \quad  \forall \ t \in \mathbb{N}, \]
and a varying risk parameter $\delta \in [0.6, 0.01]$. The cost to be minimized is given by \eqref{eq:infinitehorizoncost}, with immediate cost
\begin{align*}
    l(\bm{x},{u}) = \bm{x}^\top \bm{Q}  \bm{x} + u R u,
\end{align*}
and weights \[\bm{Q} = \begin{pmatrix}1& 0 \\ 0 & 10\end{pmatrix}, \quad {R}~=~1.\] We assume the control input to be bounded as $\vert u_t \vert \leq 0.2$ and consider different distributions for the stochastic disturbance $\bm{w}_{\step}$, to be specified in the following.

For the \gls{mpc} algorithms, we employ an \gls{mpc} horizon time of $N =  10$ and assume to know the state matrix $\bm{A}$ and the input matrix $\bm{B}$ in \eqref{eq:simenvironment}. 
We employ a quadratic terminal cost function of $l_{N}(\bm{x})=\bm{x}^\top\bm{P} \bm{x}$, where $\bm{P}$ satisfies the Lyapunov function $ \bm{P} = \bm{A} \bm{
P}\bm{A}^\top  + \bm{Q}$. For the backup strategy, the cost penalties are obtained by multiplying the slack variables with $\cslack=10^16$.

\subsection{Illustration of \Cref{theorem:mainresult}}
\label{subsection:illustrationoftheorem1}

\begin{figure}[t]
\centering
\includegraphics[width = 0.99\columnwidth]{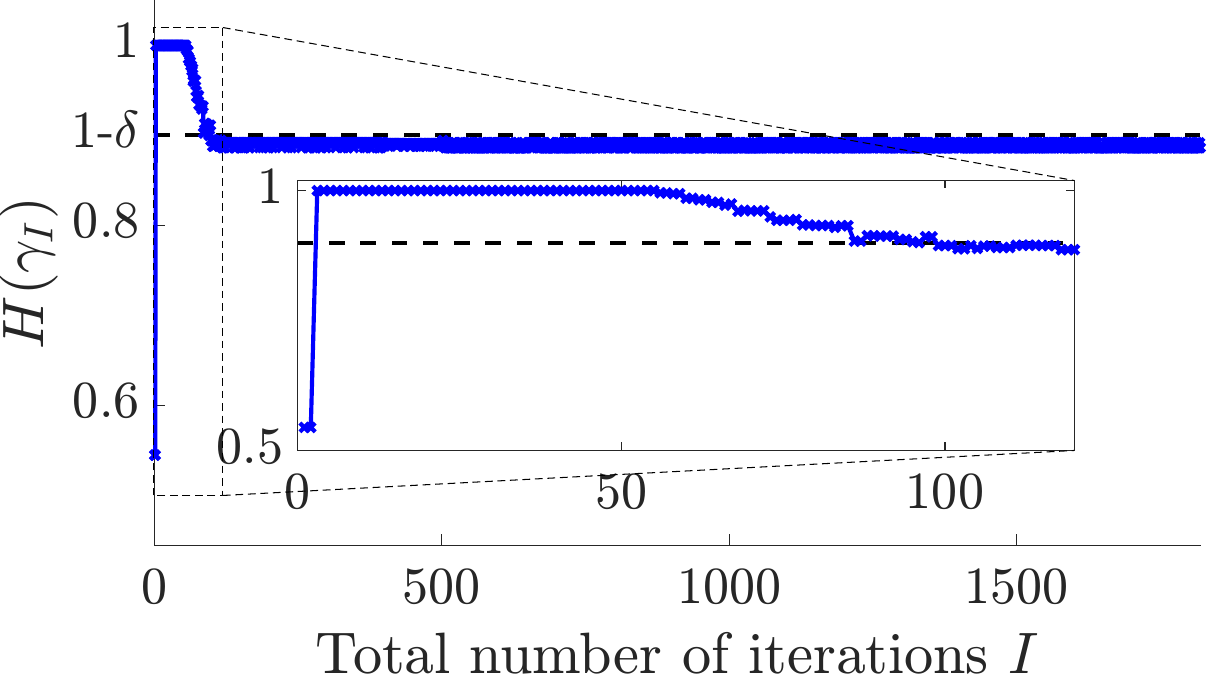}
\caption{Long-term probability of constraint satisfaction $\riskfun(\bm{\gamma}_{\Tfinal})$ obtained with parameters $\bm{\gamma}_{\Tfinal}$ returned by \Cref{alg:onlineupdatealgorithm} for different total number of iterations $\Tfinal$.} 
\label{fig:progress_of_HgammaI}
\end{figure}

\begin{figure}
\centering
\includegraphics[width = 0.99\columnwidth]{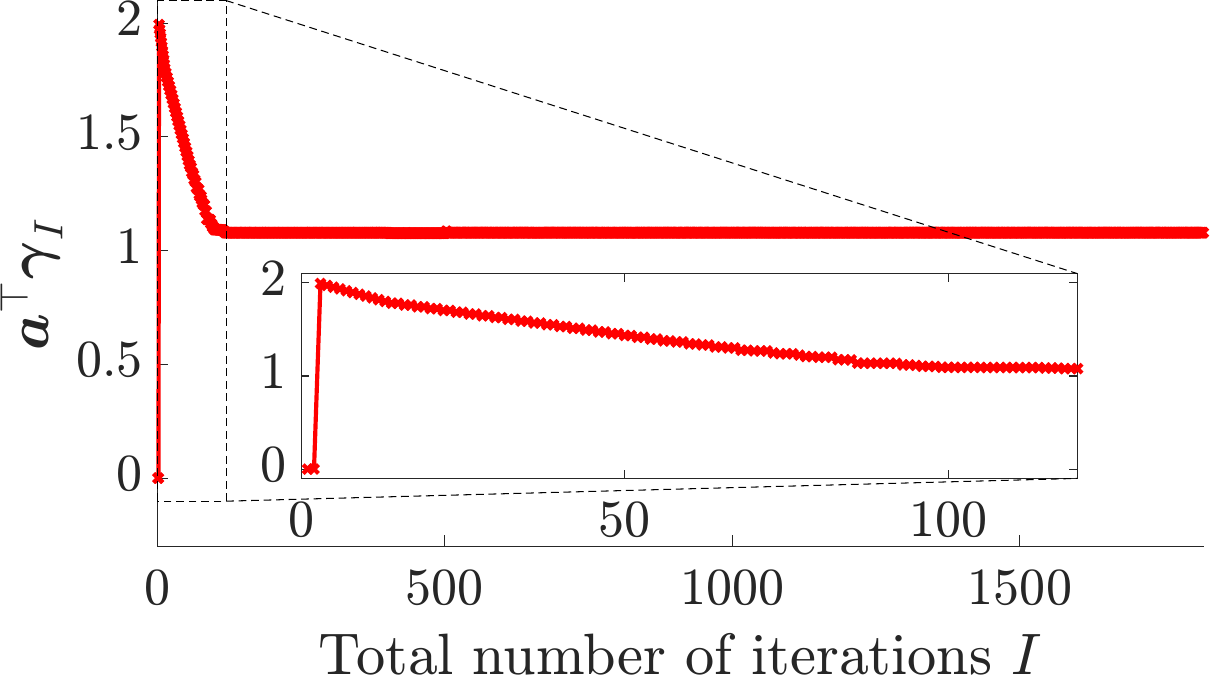}
\caption{Weighted sum $\bm{a}^\top\bm{\gamma}_{\Tfinal}$ of entries of constraint-tightening parameters $\bm{\gamma}_{\Tfinal}$ returned by \Cref{alg:onlineupdatealgorithm} for different total number of iterations $\Tfinal$.} 
\label{fig:progress_of_aTgammaI}
\end{figure}

\begin{figure}[t]
\centering
\includegraphics[width = 0.99\columnwidth]{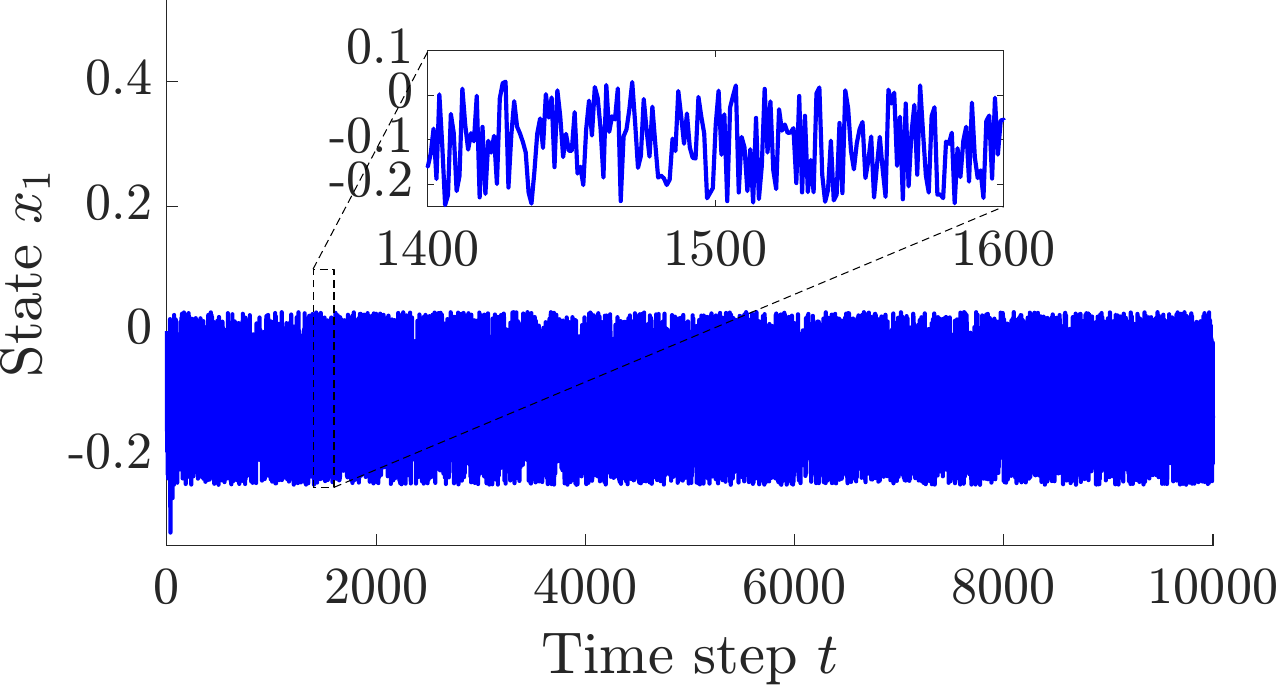}
\caption{State $x_1$ obtained for a simulation with $10^4$ time steps and the constraint-tightening parameters $\bm{\gamma}_{\Tfinal}$ returned by \Cref{alg:onlineupdatealgorithm} after $\Tfinal=1800$ iterations. The empirical probability of constraint satisfaction is $0.8912< 1 -\delta - \marginerror$ for $\marginerror = 0.01$, which is to be expected from \Cref{theorem:mainresult}.} 
\label{fig:x1_final_gammaI}
\end{figure}

To illustrate \Cref{theorem:mainresult}, we run \Cref{alg:onlineupdatealgorithm} multiple times for varying $\Tfinal=1,..,1800$ and a fixed risk of $\delta=0.1$. We consider process noise $\bm{w}_{\step}$ that is sampled from a uniform distribution on $[-0.14,0.14]$. To model $\riskfun$, we employ a squared-exponential kernel 
  \begin{equation}
	\label{eq:sekernel}
	k(\bm{\gamma},\bm{\gamma}') = \psi^{-1} \exp\left(-\frac{\lambda^2}{2}\Vert\bm{\gamma} - \bm{\gamma}'\Vert_2^2\right),
\end{equation}
with Gaussian priors for the log hyperparameters  
$\log({\psi}) \sim \mathcal{N}(-1,1)$, and
a modified Gaussian error function
\begin{align}
    s(z) = 1 + \frac{1}{\sqrt{\pi}} \int\limits_0^{z} e^{-\xi^2} d \xi
\end{align} 
as sigmoid function. To keep computations cheap while handling multiple thousands of training data points, we employ a sparse \gls{gp} approximation with all different training inputs as inducing inputs \cite{snelson2005sparse,rasmussen2010gaussian}.
We consider a one-dimensional space of constraint-tightening parameters 
\[\Gamma = \left\{   \tilde{\gamma}  \bm{1}_{d_{\gamma}}  \ \big \vert \ \tilde{{\gamma}} \in [-1,0.2] \right\}. \]
This choice of parameter space allows us to search over a one-dimensional space instead of a ${d_{\gamma}}$-dimensional one. Note~$\Gamma$ admits $\bm{\gamma}$ with negative entries, allowing us to avoid unnecessary conservatism, e.g., if the unconstrained \gls{mpc} algorithm already satisfies the chance constraints. For the optimization \eqref{eq:gammasearchgp}, we employ a weight vector of $\bm{a}=\bm{1}_{d_{\gamma}}$. To solve \eqref{eq:gammasearchgp}, we perform an exhaustive search of $\Gamma$ starting at the lowest end until we find a feasible point. This is inexpensive since \eqref{eq:gammasearchgp} corresponds to a one-dimensional optimization problem with a monotonically increasing cost. We set the initial constraint-tightening parameters to $\bm{\gamma}_0 = \bm{0}$, which does not satisfy the chance constraints. The values for $\Twait(\bm{x})$ stipulated by \Cref{theorem:mainresult} can be conservative. Instead of computing them directly, we choose $\Twait(\bm{x})=500$ for all $\Tfinal$. This choice is motivated by the observation that the state $x_1$ converges to a seemingly stationary distribution after only a handful of steps for all starting values close to $\bm{0}$. We additionally set $\Tcollect = 5000$ for all $\Tfinal$ and $\crandom = 100$, i.e., every hundredth iteration $i$ corresponds to a random search step. 

\begin{figure*}
\centering
\begin{subfigure}[b]{0.49\textwidth}
\includegraphics[width = 0.99\columnwidth]{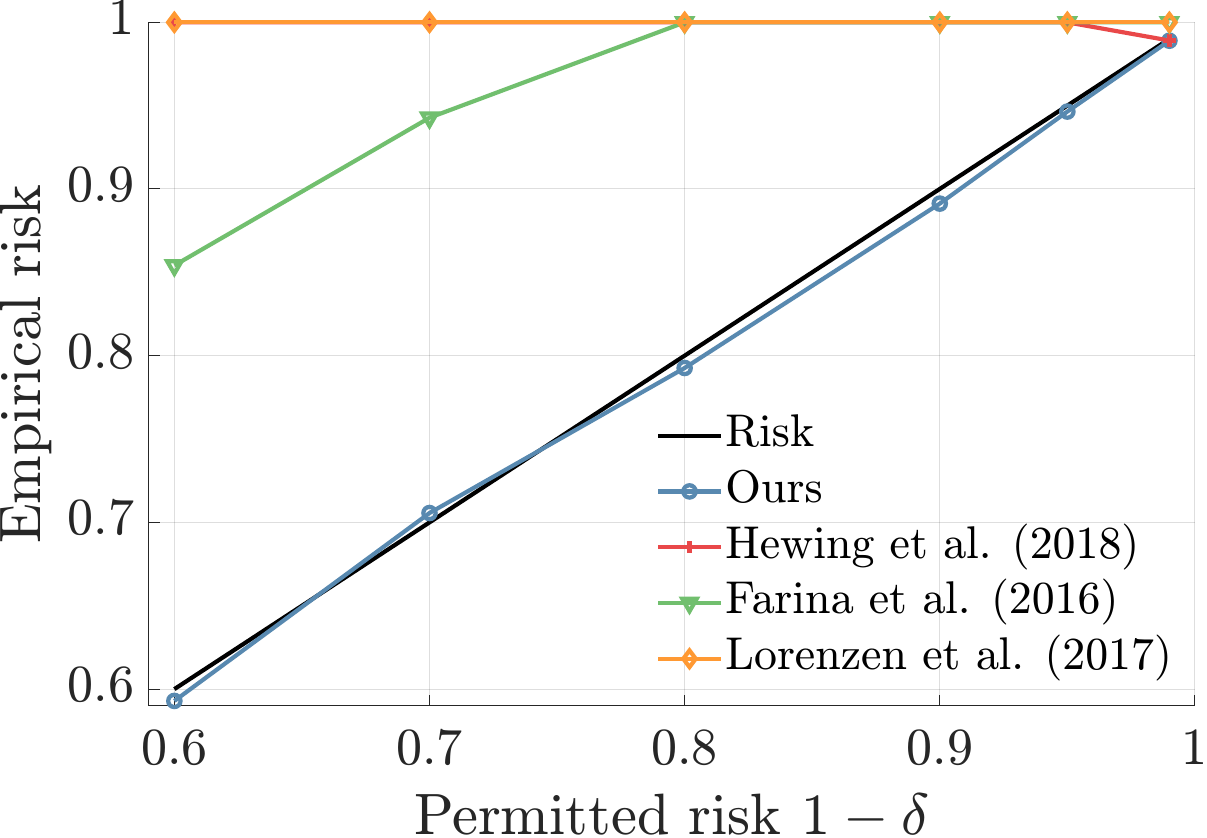}
\caption{Empirical rate of constraint satisfaction for single simulation with $2 \times 10^5$ time steps.} 
\label{fig:constr_satisf_uniform}
\end{subfigure}
\hfill
\begin{subfigure}[b]{0.49\textwidth}
\includegraphics[width = 0.99\columnwidth]{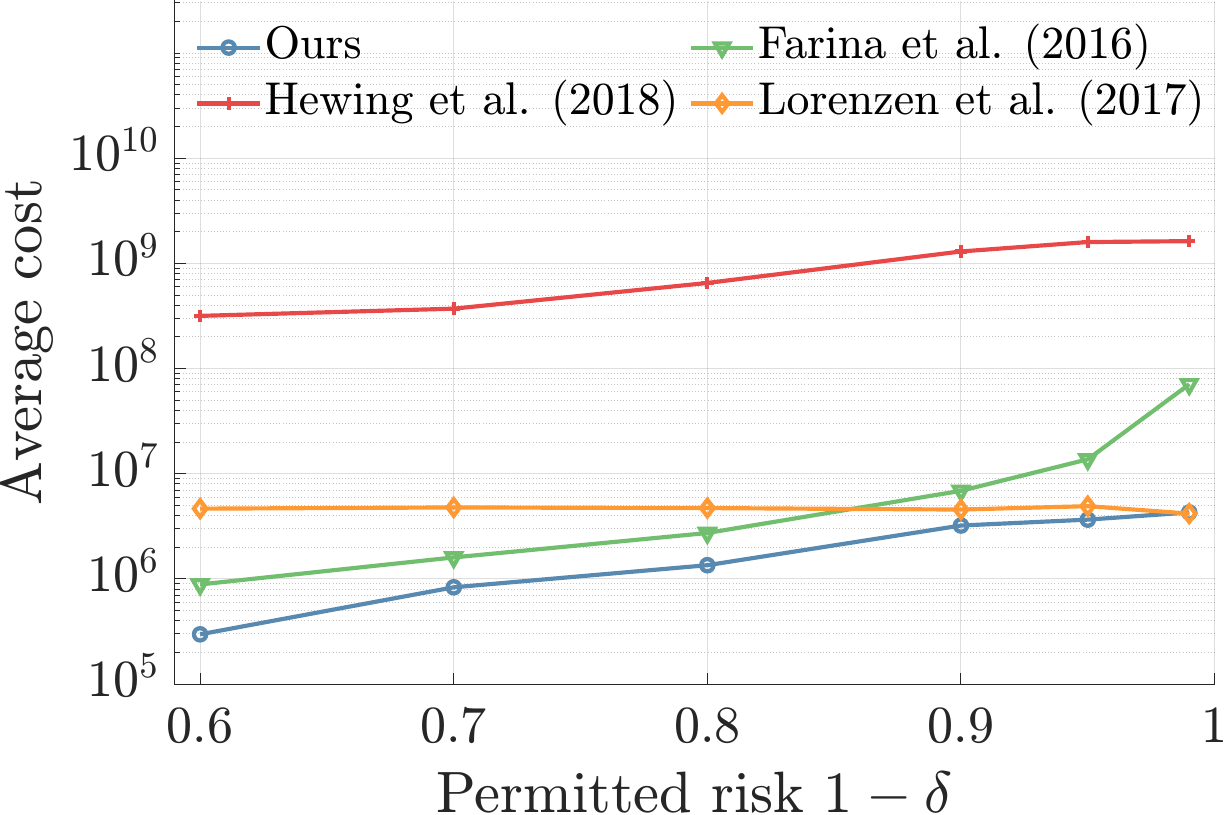}
\caption{Average cost for single simulation with $2 \times 10^5$ time steps.} 
\label{fig:cost_uniform}
\end{subfigure}
\caption{Empirical rate of constraint satisfaction and average cost for uniformly distributed uncertainties on $[-0.14,0.14]$. }
\label{fig:uniform_uncertainties_results}
\end{figure*}

\begin{figure*}
\centering
\begin{subfigure}[b]{0.49\textwidth}
\includegraphics[width = 0.99\columnwidth]{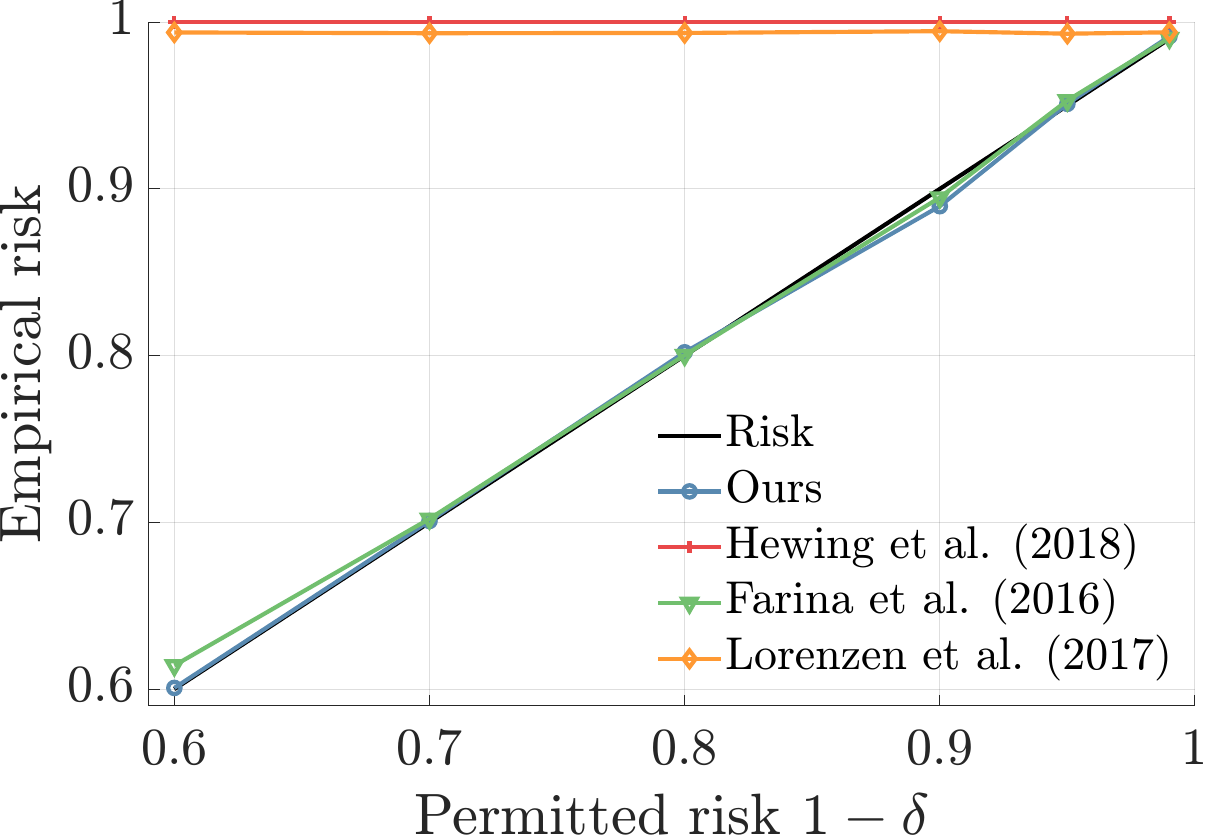}
\caption{Empirical rate of constraint satisfaction for single simulation with $2 \times 10^5$ time steps.} 
\label{fig:constr_satisf_gaussian}
\end{subfigure}
\hfill
\begin{subfigure}[b]{0.49\textwidth}
\includegraphics[width = 0.99\columnwidth]{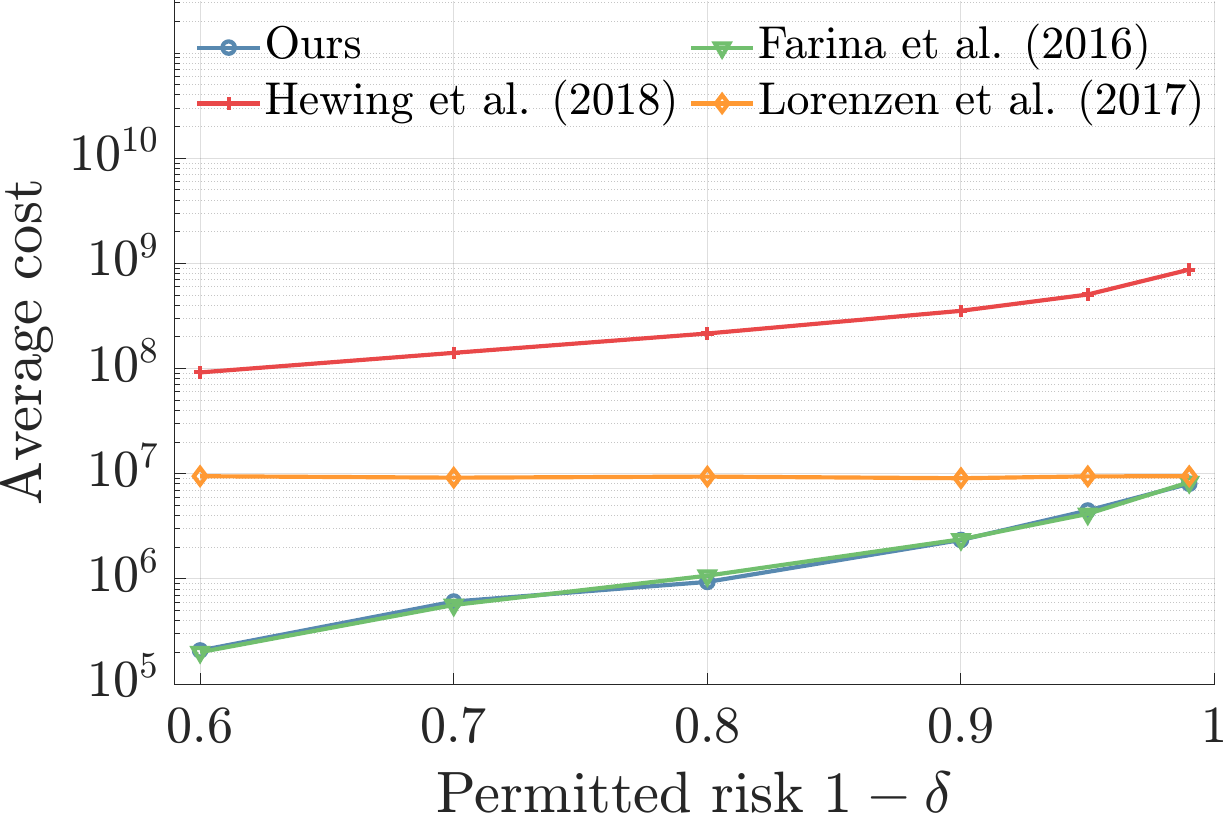}
\caption{Average cost for single simulation with $2 \times 10^5$ time steps.} 
\label{fig:cost_gaussian}
\end{subfigure}
\caption{Empirical rate of constraint satisfaction and average cost for zero-mean Gaussian uncertainties on with variance $0.08$. }
\label{fig:gaussian_uncertainties_results}
\end{figure*}

For every $\gammaopt$, we approximate $\riskfun(\bm{\gamma}_{\Tfinal})$ by measuring the average rate of constraint satisfaction over a simulation of $5000$ steps. The progress of $\riskfun(\bm{\gamma}_{\Tfinal})$ with $\Tfinal$ can be seen in \Cref{fig:progress_of_HgammaI}, that of the cost function $\bm{a}^\top\bm{\gamma}_{\Tfinal}$ can be seen in \Cref{fig:progress_of_aTgammaI}. Since $\bm{\gamma}_0 = \bm{0}$ does not satisfy the chance constraints, none of the data initially used to train the \gls{gp} model $\hat{\riskfun}$ corresponds to a feasible point. This causes the optimization problem \eqref{eq:gammasearchgp} to be infeasible. Hence, the constraint-tightening parameters $\bm{\gamma}_i$ generated by \Cref{alg:onlineupdatealgorithm} are sampled from a uniform distribution on $\Gamma$, which in turn is reflected in $\bm{\gamma}_{\Tfinal}$ if $\Tfinal$ is chosen small. However, after a handful of random searches, a vector $\bm{\gamma}_{\Tfinal}$ of constraint-tightening parameters is found that strictly satisfies the chance constraints. From there, the constraint-tightening parameters $\bm{\gamma}_{\Tfinal}$ returned by \Cref{alg:onlineupdatealgorithm} gradually decrease until the chance constraints are approximately satisfied with equality, after which $\bm{\gamma}_{\Tfinal}$ stays approximately constant. This is because $\hat{\riskfun}$ approximates ${\riskfun}$ accurately if enough data is available. If we choose $\Tfinal>100$, then the final vector of constraint-tightening parameters $\bm{\gamma}_{\Tfinal}$ will have converged to a value that satisfies the chance constraints up to a margin of $\marginerror<0.01$, which is to be expected by \Cref{theorem:mainresult}. The state $x_1$ for a sample simulation with the constraint-tightening parameters $\bm{\gamma}_{\Tfinal}$ obtained with \Cref{alg:onlineupdatealgorithm} and $\Tfinal=1800$ is shown in \Cref{fig:x1_final_gammaI}.

\subsection{Comparison with Existing Approaches}
\label{subsection:comparison}

We now compare our approach to \cite{FarinaGiulioniScattolini2016,LorenzenEtalAllgoewer2017,hewing2018stochastic}, three different state-of-the-art approaches for computing constraint-tightening parameters of analytic reformulations of \gls{SMPC} problems. The approach presented in \cite{FarinaGiulioniScattolini2016} computes constraint-tightening parameters by determining a $1-\delta$ credible interval for the process noise distribution and adjusting the constraints accordingly over the prediction horizon. 
Similarly, the approach of \cite{hewing2018stochastic,hewing2020recursively,hewing2020scenario} leverages the process noise distribution to compute probabilistic reachable sets, guaranteeing that the chance constraints are satisfied in a closed-loop fashion. The method of \cite{LorenzenEtalAllgoewer2017} employs a sampling-based approach to compute the constraint-tightening parameters. Furthermore, it includes a stochastic constraint on the first step to ensure the recursive feasibility of the optimization problem. We note that we do not employ a gain matrix, as suggested by \cite{LorenzenEtalAllgoewer2017,hewing2018stochastic}. However, their approaches are also applicable without a gain matrix, provided that the system matrix $\bm{A}$ is stable, which holds in this setting. Our method employs the same setup as in \Cref{subsection:illustrationoftheorem1}, except for the total number of iterations, which we set to $\Tfinal=150$. We consider the same input constraints and backup control policy for all approaches. 

We consider two different settings, one where the entries of the process noise are uniformly distributed on $[-0.14,0.14]$, and one where they are normally distributed with mean zero and standard deviation $0.08$. We compute constraint-tightening parameters for seven different risk parameters 
\[
\delta \in \{0.6, 0.7, 0.8, 0.9, 0.95, 0.99\}.
\] 
We simulate each setting once over $20000$ time steps. We additionally compare the resulting average cost for each approach.

The results for the uniformly distributed process noise entries can be seen in \Cref{fig:uniform_uncertainties_results} and those for the Gaussian distributed process noise entries in \Cref{fig:gaussian_uncertainties_results}. Our approach satisfies the chance constraints for all $\delta$ up to a small margin of $\marginerror<0.01$, which is to be expected by \Cref{theorem:mainresult}. Furthermore, it satisfies the constraints \textit{tightly} up to a margin of $0.01$, which indicates that $\gammaopt$ converged to a local minimum of \eqref{eq:gammasearchgp}. 
This is also reflected in the average cost, which is lower than that of the other approaches. This difference is particularly accentuated for the methods of \cite{FarinaGiulioniScattolini2016} and \cite{hewing2018stochastic} in the setting with uniformly distributed uncertainties due to their use of Chebyshev inequalities, which are conservative. In the case of Gaussian distributed uncertainties, the approach of \cite{FarinaGiulioniScattolini2016} performs similarly to our approach, which is to be expected since the corresponding constraint-tightening parameters tightly satisfy the chance constraints.

\section{Conclusion}
\label{section:conclusion}
We have presented a binary regression approach for choosing the constraint-tightening parameters of the analytic reformulation of a stochastic optimal control problem. We have provided sufficient conditions under which a finite number of system samples can approximate the long-term rate of constraint satisfaction. Our algorithm leverages these conditions to compute constraint-tightening parameters that provably satisfy the chance constraints up to an arbitrarily low error using a \gls{gp} binary regression model. In numerical simulations, our approach yielded constraint-tightening parameters that tightly satisfied chance constraints, allowing it to outperform state-of-the-art approaches in average cost. Potential future research directions include obtaining a convergence rate for the proposed method and experimenting with different binary regression tools, e.g., logistical or probit regression approaches.

\section*{Appendix - Proof of \Cref{theorem:mainresult}}

To show that the constraint-tightening parameters returned by our algorithm converge to values that respect the chance constraints, we first need to show that the \gls{gp}-based binary regression model is increasingly accurate as more data is employed to train it. To this end, we employ the following result, which applies to arbitrary data sets.
\begin{lem}
\label{lem:ghosal}
    Let \Cref{ass:compactGamma,ass:rkhsnorm,ass:stableloop,ass:decayofsequences} hold, and consider a sequence of data sets, increasing in size, where the training inputs are fixed but potentially different between sets, i.e.,
    \[\tilde{\mathcal{D}}_j = \{\tilde{\bm{\gamma}}_{j,n}, \tilde{y}_{j,n} \}_{n=1,\ldots,j},\qquad j\in\mathbb{N}\]
    where
    \[\tilde{\bm{\gamma}}_{j,1},...,\tilde{\bm{\gamma}}_{j,j} \in \Gamma\]
    and $\tilde{y}_{j,1},...,\tilde{y}_{j,j}$ are Bernoulli random variables with
    \[\tilde{y}_{j,n} = \mathbb{I}_{\mathbb{R}_-^{d_{\text{c}}}}\left(\bm{h}(\bm{x})\right), \quad \bm{x}\sim \pi_{\tilde{\bm{\gamma}}_{j,n}}.\]
    Here $\pi_{\tilde{\bm{\gamma}}_{j,n}}$ denotes the unique stationary measure for the closed-loop system \eqref{eq:closedloop} using $\tilde{\bm{\gamma}}_{j,n}$. Let $\tilde{\riskfun}_{j}$ denote the posterior \gls{gp} conditioned on $\tilde{\mathcal{D}}_j$. Then, for every $\marginerror,\marginprobability > 0$, there exists a $J\in\mathbb{N}$, such that 
    \begin{align*}
        \Pr\left(\frac{1}{j} \sum \limits_{n=1}^j \vert \riskfun(\tilde{\bm{\gamma}}_{j,n}) - \tilde{\riskfun}_{{j}}(\tilde{\bm{\gamma}}_{j,n})\vert > \marginerror \ \Big\vert \ \tilde{\mathcal{D}}_j\right) < \marginprobability.
    \end{align*}
    holds for all $j\geq J$. 
    \vspace{0.1cm}
\end{lem}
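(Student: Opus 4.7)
The plan is to reduce this statement to a standard posterior contraction result for Gaussian process binary regression with (random) fixed design, and then translate the resulting convergence on the latent scale into convergence on the probability scale via Lipschitz continuity of $s$. The natural target is a Ghosal–type consistency/contraction theorem (in the spirit of the reference \cite{ghosal2006posterior} already invoked in \Cref{ass:decayofsequences}), since \Cref{ass:rkhsnorm} and \Cref{ass:decayofsequences} were tailored precisely to fit its hypotheses: the true latent function $q^*$ lies in the RKHS of a smooth kernel, and the hyperparameter priors $p_\psi,p_\lambda$ have the decay needed to guarantee enough prior mass around $q^*$ and simultaneously rule out too-rough posterior samples.

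First, I would set up the problem in the language of binary regression. Conditioning on the design $\{\tilde{\bm{\gamma}}_{j,n}\}_{n=1}^{j}$, each label $\tilde{y}_{j,n}$ is Bernoulli with success probability $\riskfun(\tilde{\bm{\gamma}}_{j,n})=s(q^*(\tilde{\bm{\gamma}}_{j,n}))$ by \Cref{lem:stationarymeasure} and \Cref{ass:rkhsnorm}. The GP posterior on $q_{\text{GP}}$ together with the sigmoid link induces a posterior on the curve of success probabilities. The key quantity is the empirical $L^1$ discrepancy
\begin{equation*}
    d_j(\riskfun,\tilde{\riskfun}_j) \;=\; \frac{1}{j}\sum_{n=1}^{j}\bigl|\riskfun(\tilde{\bm{\gamma}}_{j,n}) - \tilde{\riskfun}_j(\tilde{\bm{\gamma}}_{j,n})\bigr|,
\end{equation*}
which, via the Lipschitz continuity of $s$ assumed in \Cref{ass:rkhsnorm}, is controlled by the analogous empirical discrepancy on the latent functions $q^*$ and the posterior of $q_{\text{GP}}$.

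Second, I would invoke the posterior consistency result for GP binary regression. Under \Cref{ass:rkhsnorm}, the kernel is $2\alpha+2$ times differentiable and the true latent $q^*$ belongs to its RKHS. The hyperparameter prior tail bounds in \Cref{ass:decayofsequences} together with the balance condition $M_j^2\psi_j\lambda_j^{-2}\geq b_1 j$ and $M_j^{d_\gamma/\alpha}\leq b_2 j$ are exactly of the form used in \cite{ghosal2006posterior} to build sieves of effective dimension growing like $j$ and to show that the prior places enough mass on Kullback–Leibler neighborhoods of $q^*$. Combined with compactness of $\Gamma$ (\Cref{ass:compactGamma}), these hypotheses yield posterior consistency in empirical Hellinger (equivalently empirical $L^1$, up to constants) distance: for every $\varepsilon>0$,
\begin{equation*}
    \Pr\!\left(\, d_j(\riskfun,\tilde{\riskfun}_j) > \varepsilon \,\bigm|\, \tilde{\mathcal{D}}_j \right) \;\longrightarrow\; 0 \quad \text{as } j\to\infty,
\end{equation*}
uniformly in the design sequence. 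Choosing $\varepsilon=\marginerror$ and taking $J$ large enough that the convergence puts the tail probability below $\marginprobability$ gives the claim.

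The main obstacle I anticipate is the uniformity in the \emph{sequence} of designs: unlike the standard setting, the training inputs $\tilde{\bm{\gamma}}_{j,n}$ are allowed to change with $j$ and need not come from a single ergodic process. I would handle this either by verifying that the sieve and prior-mass arguments of \cite{ghosal2006posterior} are purely design-conditional (they are, since the Hellinger-type distance used there is already the empirical one), or by first proving the result for an arbitrary but fixed design sequence and noting that the resulting rate of contraction depends on the design only through the empirical covering numbers of $\Gamma$, which are controlled uniformly by \Cref{ass:compactGamma}. A secondary technical point is passing from contraction of $q_{\text{GP}}$ to contraction of $\tilde{\riskfun}_j=\mathbb{E}[s(q_{\text{GP}})\mid\tilde{\mathcal{D}}_j]$; this is straightforward via Jensen's inequality and the Lipschitz bound on $s$, so it should not introduce any real difficulty.
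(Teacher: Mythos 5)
Your proposal is correct and takes essentially the same route as the paper: the paper's proof is a one-line appeal to \cite[Theorem 2]{ghosal2006posterior} together with \Cref{lem:stationarymeasure}, which is exactly the reduction you carry out (identifying the labels as Bernoulli with success probability $s(q^*(\tilde{\bm{\gamma}}_{j,n}))$ via the stationary measure, and verifying Ghosal--Roy's hypotheses from \Cref{ass:rkhsnorm,ass:decayofsequences}). Your only superfluous worry is uniformity over design sequences, which this lemma does not require ($J$ may depend on the sequence); that uniformity is handled separately in \Cref{lem:ghosal_extended}.
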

\begin{proof}
    The proof follows straightforwardly from \Cref{lem:stationarymeasure} and \cite[Theorem 2]{ghosal2006posterior}.
\end{proof}

\Cref{lem:ghosal} implies that if the measurements $\tilde{y}_{j,1},\ldots, \tilde{y}_{j,j}$ are sampled from the stationary distributions $\pi_{\tilde{\bm{\gamma}}_j,n}$ for arbitrary but fixed training inputs $\tilde{\bm{\gamma}}_{j,1},\ldots, \tilde{\bm{\gamma}}_{j,j} \in \Gamma$,
then the accuracy of the \gls{gp} model grows in expectation with the data size $j$. A direct consequence is that we obtain convergence as the data size grows, independently of the training input locations.

\begin{lem}
\label{lem:ghosal_extended}
    Let \Cref{ass:compactGamma,ass:rkhsnorm,ass:stableloop,ass:decayofsequences} hold. Then, for every $\marginerror,\marginprobability > 0$, there exists a $J\in\mathbb{N}$, such that 
     \begin{align*}
        \Pr\left(\frac{1}{j} \sum \limits_{n=1}^j \vert \riskfun(\tilde{\bm{\gamma}}_{j,n}) - \tilde{\riskfun}_{{j}}(\tilde{\bm{\gamma}}_{j,n})\vert > \marginerror \ \Big\vert \ \tilde{\mathcal{D}}_j\right) < \marginprobability.
    \end{align*}
    holds for all $j\geq J$ and all data sets
    \[\tilde{\mathcal{D}}_j = \{\tilde{\bm{\gamma}}_{j,n}, \tilde{y}_{j,n} \}_{n=1,\ldots,j},\qquad j\in\mathbb{N}\]
    with training inputs
    \[\tilde{\bm{\gamma}}_{j,1},...,\tilde{\bm{\gamma}}_{j,j} \in \Gamma,\]
    Bernoulli random variables 
    \[\tilde{y}_{j,n} = \mathbb{I}_{\mathbb{R}_-^{d_{\text{c}}}}\left(\bm{h}(\bm{x})\right), \quad \bm{x}\sim \pi_{\tilde{\bm{\gamma}}_{j,n}},\]
    where $\pi_{\tilde{\bm{\gamma}}_{j,n}}$ denotes the unique stationary measure for the closed-loop system \eqref{eq:closedloop} using $\bm{\gamma}_{j,n}$, and $\tilde{\riskfun}_{j}$ denotes the posterior \gls{gp} conditioned on $\tilde{\mathcal{D}}_j$.. 
    \vspace{0.1cm}
\end{lem}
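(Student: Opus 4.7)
My plan is to derive Lemma 2 from Lemma 1 by a contradiction argument, which makes precise the observation that the convergence threshold in the former is in fact uniform over admissible data sets. The gap to bridge is that Lemma 1 furnishes a threshold $J$ for each fixed sequence $\{\tilde{\mathcal{D}}_j\}_{j\in\mathbb{N}}$, whereas Lemma 2 demands a single $J$ that works simultaneously for all such sequences.

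I would proceed as follows. Fix $\marginerror,\marginprobability>0$ and suppose for contradiction that no $J$ as in the conclusion of Lemma 2 exists. Then for each $J\in\mathbb{N}$ one can select some index $j\geq J$ together with a data set $\tilde{\mathcal{D}}_j$ meeting all the structural requirements (training inputs in $\Gamma$, binary labels sampled from the stationary distributions $\pi_{\tilde{\bm{\gamma}}_{j,n}}$) for which
\[
\Pr\left(\frac{1}{j}\sum_{n=1}^{j}\vert\riskfun(\tilde{\bm{\gamma}}_{j,n}) - \tilde{\riskfun}_{j}(\tilde{\bm{\gamma}}_{j,n})\vert > \marginerror \ \Big\vert\ \tilde{\mathcal{D}}_j\right) \geq \marginprobability.
\]
Iterating over $J = 1, 2, \ldots$ extracts an infinite strictly increasing subsequence $j_1<j_2<\ldots$ of such counterexamples.

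Next, I would complete this subsequence to a full sequence indexed by $j\in\mathbb{N}$ by filling in the missing sizes arbitrarily, subject only to the structural requirements that the training inputs lie in $\Gamma$ and that the labels are sampled from the corresponding stationary distributions. The resulting sequence satisfies all hypotheses of Lemma 1. Applying Lemma 1 to it then yields a threshold $J'$ beyond which the prescribed bound holds with probability at least $1-\marginprobability$. However, by construction infinitely many counterexamples $j_k$ exceed $J'$, which contradicts the conclusion of Lemma 1. Hence the uniform threshold claimed by Lemma 2 must exist.

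The main obstacle will be handling measurability and probabilistic consistency when constructing the filled-in sequence, since the labels $\tilde{y}_{j,n}$ must be drawn from the generally implicit stationary measures $\pi_{\tilde{\bm{\gamma}}_{j,n}}$ and the training inputs $\tilde{\bm{\gamma}}_{j,n}$ themselves may have been chosen as a function of prior randomness in the counterexample selection. Existence of a joint probability space carrying all these variables follows from standard product-measure constructions, but one must be careful that the conditional probability statement in Lemma 1 and the counterexample inequalities above refer to the same conditional distributions, so that the contradiction step is legitimate rather than comparing incompatible conditional laws.
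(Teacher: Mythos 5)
Your proposal is correct and follows essentially the same route as the paper: the paper's proof is precisely a contradiction argument that extracts, for each $J$, a counterexample data set of size $j\geq J$ with fixed training inputs and then invokes \Cref{lem:ghosal} on the resulting sequence to reach a contradiction. Your version merely spells out more explicitly the step of completing the counterexample subsequence to a full sequence and the measurability caveats, which the paper leaves implicit.
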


\vspace{0.1cm}
\begin{proof}
    Assume the contrary is true. Then, there exist scalars $\marginerror,\marginprobability>0$ such that, for every $J\in\mathbb{N}$, there exists a $j\geq J$ and a sequence of training inputs $\bar{\bm{\gamma}}_{{j},1},...,\bar{\bm{\gamma}}_{{j},j} \in \Gamma$, such that 
    \begin{align*}
        \Pr\left(\frac{1}{j} \sum \limits_{n=1}^j \vert \riskfun(\bar{\bm{\gamma}}_{j,n}) - \bar{\riskfun}_{{j}}(\bar{\bm{\gamma}}_{j,n})\vert > \marginerror \ \Bigg\vert \ \bar{\mathcal{D}}_j\right) \geq \marginprobability,
    \end{align*}
    where
        \begin{align}\bar{\mathcal{D}}_j &= \{\bar{\bm{\gamma}}_{j,n}, \bar{y}_{j,n} \}_{n=1,\ldots,j},\\ 
        \tilde{y}_{j,n} &= \mathbb{I}_{\mathbb{R}_-^{d_{\text{c}}}}\left(\bm{h}(\bm{x})\right), \quad \bm{x}\sim \pi_{\tilde{\bm{\gamma}}_{j,n}},\end{align}
        and $\bar{\riskfun}_{{j}}$ denotes the posterior \gls{gp} conditioned on $\bar{\mathcal{D}}_j$. Since \[\bar{\bm{\gamma}}_{{j},1},...,\bar{\bm{\gamma}}_{{j},j}\] is fixed for every $j$ and $j$ can be made arbitrarily large, this contradicts \Cref{lem:ghosal}.
\end{proof}

\Cref{lem:ghosal_extended} implies pointwise convergence of the \gls{gp} model with the number of data, provided that the data is sampled from the stationary distributions. We now show that \Cref{lem:stationarymeasure} can be leveraged to recover guarantees similar to those of \Cref{lem:ghosal_extended} for the \gls{gp} model trained on data collected over a finite horizon, provided that the horizon is sufficiently long.

\begin{lem}
\label{lem:recovered_bounds}
    Let \Cref{ass:compactGamma,ass:rkhsnorm,ass:stableloop,ass:decayofsequences} hold. Then, for every $\marginerror,\marginprobability > 0$, there exists a $J\in\mathbb{N}$, such that
    \begin{align*}
        \Pr\left(\frac{1}{j} \sum \limits_{n=1}^j \vert \riskfun(\bm{\gamma}_{n}) - \hat{\riskfun}_{{j}}(\bm{\gamma}_{n})\vert > \marginerror \ \Bigg\vert \ {\mathcal{D}}_j \right) < \marginprobability .
    \end{align*}
     holds for all data sets
     \[{\mathcal{D}}_j = \{{\bm{\gamma}}_{n}, {y}_{n} \}_{n=1,\ldots,j}\]
     with $j\geq J$, where the training inputs ${\bm{\gamma}}_{1},...,{\bm{\gamma}}_{j} \in \Gamma$ of ${\mathcal{D}}_j$ are identical to those of ${\mathcal{D}}_{j-1}$ up to the $j$-the training input, and Bernoulli random variables
    \[{y}_{n} = \mathbb{I}_{\mathbb{R}_-^{d_{\text{c}}}}\left(\bm{h}(\bm{x}_{n,\Tn})\right),\]
    where $\bm{x}_{n,\Tn}$ corresponds to the state after $\Tn$ steps under the closed-loop dynamics \eqref{eq:closedloop} given $\bm{\gamma}_n$ and arbitrary initial state $\bm{x}_{{n},0}\in\mathcal{X}$, and the number of steps satisfy
    \begin{align}
        \Tn \geq \frac{1}{-\log(\varphi)}\left( \log\left({\vartheta}V(\bm{x}_{n,0})\right) + j \log\left(2\right) \right),
    \end{align}
    with $\varphi$ and $\vartheta$ as in \Cref{lem:stationarymeasure}. Here $\hat{\riskfun}_{j} \in \Gamma$ denotes the posterior \gls{gp} conditioned on ${\mathcal{D}}_j = \{{\bm{\gamma}}_{n}, {y}_{n} \}_{n=1,\ldots,j}$.
\end{lem}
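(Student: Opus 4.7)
The only gap between the hypotheses of \Cref{lem:ghosal_extended} and those of the present lemma is that here the training labels $y_n$ are collected after $T_n$ finite closed-loop steps from an arbitrary initial state $\bm{x}_{n,0}$, whereas \Cref{lem:ghosal_extended} requires labels drawn exactly from the stationary measure $\pi_{\bm{\gamma}_n}$. My plan is to couple the actual label sequence with an idealized one sampled from the stationary distribution and to show the coupling error vanishes fast enough that the conclusion of \Cref{lem:ghosal_extended} transfers to the data set we actually have.

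First, I would apply \Cref{lem:stationarymeasure} at each sample to obtain
\[
\bigl|\Pr(y_n=1) - \riskfun(\bm{\gamma}_n)\bigr| \leq \vartheta V(\bm{x}_{n,0})\varphi^{T_n}.
\]
The hypothesis on $T_n$ is calibrated precisely so that this upper bound is at most $2^{-j}$. Using a maximal coupling of two Bernoulli random variables with means close in total variation, I would, for each $n\leq j$, construct alongside the actual label $y_n$ an idealized label $\tilde{y}_n$ with success probability exactly $\riskfun(\bm{\gamma}_n)$ satisfying $\Pr(y_n\neq\tilde{y}_n)\leq 2^{-j}$. Because the training inputs $\bm{\gamma}_n$ may be chosen adaptively from $\mathcal{D}_{n-1}$, this coupling has to be built sequentially: at stage $n$, $\bm{\gamma}_n$ is already determined by the history, and only then is $\tilde{y}_n$ drawn using shared randomness with $y_n$. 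A union bound gives
\[
\Pr\bigl(\mathcal{D}_j\neq\tilde{\mathcal{D}}_j\bigr) \leq j\cdot 2^{-j}.
\]

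Next, I would invoke \Cref{lem:ghosal_extended} with tolerance $(\marginerror,\marginprobability/2)$, yielding a $J_0\in\mathbb{N}$ such that, for $j\geq J_0$ and any training inputs in $\Gamma$, the idealized posterior $\tilde{\riskfun}_j$ satisfies the stated average-error bound on $\tilde{\mathcal{D}}_j$ with posterior probability of failure below $\marginprobability/2$. On the event $\{\mathcal{D}_j=\tilde{\mathcal{D}}_j\}$ the two posteriors coincide, so the bound transfers verbatim to $\hat{\riskfun}_j$. Choosing $J$ to dominate both $J_0$ and the threshold at which $j\cdot 2^{-j}<\marginprobability/2$, a final union bound over the coupling failure and the posterior failure yields the claim.

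The main obstacle is reconciling the adaptive selection of $\bm{\gamma}_n$ (which in \Cref{alg:onlineupdatealgorithm} depends on prior labels) with the ``for all data sets'' phrasing of \Cref{lem:ghosal_extended}. I would handle this by conditioning on the realized training-input sequence $\bm{\gamma}_1,\ldots,\bm{\gamma}_j$ before invoking \Cref{lem:ghosal_extended}, which is legitimate because its conclusion is uniform over all input sequences in $\Gamma$; the coupling argument above then applies conditionally, and an outer expectation over the realized $\bm{\gamma}_n$ preserves the bound.
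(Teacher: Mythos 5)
Your proposal is correct and rests on the same two pillars as the paper's proof --- the mixing bound of \Cref{lem:stationarymeasure}, calibrated via the condition on $\Tn$ so that the per-sample bias is at most $2^{-j}$, and \Cref{lem:ghosal_extended} applied to an idealized data set whose labels are drawn exactly from the stationary measures --- but the bridge between the actual and idealized data sets is built differently. The paper does not construct a coupling: it expands $\Pr\left(\frac{1}{j}\sum_n\vert\riskfun(\bm{\gamma}_n)-\hat{\riskfun}_j(\bm{\gamma}_n)\vert>\marginerror\right)$ as a sum over all binary label vectors $\bm{y}_m$ of product probabilities $P_m=\prod_n p_n^m$, does the same for the stationary-sampled labels to get $\tilde{P}_m$, and bounds $\vert P_m-\tilde{P}_m\vert\leq\tilde{P}_m\left(\left(1+2^{-j}\right)^j-1\right)\leq\tilde{P}_m$ using $\vert p_n^m-\tilde{p}_n^m\vert\leq 2^{-j}$, so the actual failure probability is at most twice the idealized one. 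Your maximal-coupling route instead bounds $\Pr(\mathcal{D}_j\neq\tilde{\mathcal{D}}_j)\leq j\,2^{-j}$ and transfers the bound on the event where the data sets coincide; this is arguably cleaner, gives an additive rather than multiplicative error term, and makes explicit where the budget $\marginprobability/2+j\,2^{-j}$ comes from. The price is that you must construct the coupling sequentially to respect the adaptive choice of $\bm{\gamma}_n$, a point you correctly flag; note that the paper's product factorization $P_m=\prod_n p_n^m$ quietly relies on the same conditional structure (each label's law given the past depends only on $\bm{\gamma}_n$ and $\bm{x}_{n,0}$), so neither argument is more exposed on that front. Both proofs then conclude identically by invoking \Cref{lem:ghosal_extended} with tolerance $\marginprobability/2$ and taking $J$ large enough.
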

\begin{proof}
    Choose arbitrary $\marginerror>$ and $\marginprobability>$. Note that, since ${y}_{n}$ are Bernoulli random variables, we can write 
    \begin{align*}
        &\Pr\left(\frac{1}{j} \sum \limits_{n=1}^j \vert \riskfun(\bm{\gamma}_{n}) - \hat{\riskfun}_{{j}}(\bm{\gamma}_{n})\vert > \marginerror \right)  \\ 
        = & \sum_{m=1}^{j^2} P_m \ \mathbb{I}_{\mathbb{R}_+} \left(\frac{1}{j} \sum \limits_{n=1}^j \vert \riskfun(\bm{\gamma}_{n}) - \hat{\riskfun}_{n}(\bm{\gamma}_{n} \vert \bm{y}_m)\vert -\marginerror  \right),
    \end{align*}
    where $\hat{\riskfun}_{n}(\bm{\gamma}_{n} \vert \bm{y}_m)$ denotes the \gls{gp} model conditioned on the measurements $\bm{y}_m \in \{0,1\}^n$, the vectors $
        \bm{y}_1,\ldots,\bm{y}_{j^2}$ correspond to all possible different $j$-dimensional vectors containing zeros and ones, and 
    \begin{align*}
        P_m = \prod_{n=1}^j p^m_n, 
    \end{align*}
    with
    \begin{align*}
         p^m_n = \begin{cases}
\mathbb{E}\left[\mathbb{I}_{\mathbb{R}_-^{d_{\text{c}}}}\left(\bm{h}(\bm{x}_{n,\Tn})\right)\right]\quad &\text{if} \ {y}_{m,n} =1 \\
1-\mathbb{E}\left[\mathbb{I}_{\mathbb{R}_-^{d_{\text{c}}}}\left(\bm{h}(\bm{x}_{n,\Tn})\right)\right] &\text{if} \ {y}_{m,n} =0,
        \end{cases}
    \end{align*}
    where ${y}_{m,n}$ denotes the $n$-th entry of the vector $\bm{y}_m$. Consider now the data set $\tilde{\mathcal{D}}_j = \{{\bm{\gamma}}_{n}, \tilde{y}_{n} \}_{n=1,\ldots,j}$ obtained by sampling from the stationary distribution  of the closed-loop system given $\bm{\gamma}_{n}$, i.e.,
    \[\tilde{y}_{n} = \mathbb{I}_{\mathbb{R}_-^{d_{\text{c}}}}\left(\bm{h}(\bm{x})\right), \quad \bm{x}\sim \pi_{\bm{\gamma}_{n}}.\] 
    By applying the same procedure as above, we obtain
    \begin{align*}
        & \Bigg \vert \Pr\left(\frac{1}{j} \sum \limits_{n=1}^j \vert \riskfun(\bm{\gamma}_{n}) - \hat{\riskfun}_{{j}}(\bm{\gamma}_{n})\vert > \marginerror \ \Bigg\vert \ {\mathcal{D}}_j \right) \\
        &\quad - \Pr\left(\frac{1}{j} \sum \limits_{n=1}^j \vert \riskfun(\bm{\gamma}_{n}) - \hat{\riskfun}_{{j}}(\bm{\gamma}_{n})\vert > \marginerror \ \Bigg\vert \ \tilde{\mathcal{D}}_j \right)\Bigg \vert  \\
        = & \sum_{m=1}^{j^2}  \vert P_m-\tilde{P}_m \vert   \mathbb{I}_{\mathbb{R}_+} \left(\frac{1}{j} \sum \limits_{n=1}^j \vert \riskfun(\bm{\gamma}_{n}) - \hat{\riskfun}_{n}(\bm{\gamma}_{n} \vert \bm{y}_m)\vert -\marginerror\right) ,
    \end{align*}
    where
    \begin{align*}
        \tilde{P}_m = \prod_{n=1}^j \tilde{p}^m_n, 
    \end{align*}
    and
    \begin{align*}
         \tilde{p}^m_n = \begin{cases}
\mathbb{E}_{\pi_{\bm{\gamma}}}\left[\mathbb{I}_{\mathbb{R}_-^{d_{\text{c}}}}\left(\bm{h} (\bm{x})\right) \right]\quad &\text{if} \ {y}_{m,n} =1 \\
1-\mathbb{E}_{\pi_{\bm{\gamma}}}\left[\mathbb{I}_{\mathbb{R}_-^{d_{\text{c}}}}\left(\bm{h}(\bm{x})\right) \right] &\text{if} \ {y}_{m,n} =0. 
        \end{cases}
        \end{align*}
        From \Cref{lem:stationarymeasure}, it follows that
        \begin{align}
        \label{eq:bound_deltapmn}
            \vert \tilde{p}^m_n - {p}^m_n\vert \leq \vartheta V(\bm{x}_{n,0}) \varphi^{\Tn}.
        \end{align}
        By inserting \eqref{eq:stepslowerbound} into \eqref{eq:bound_deltapmn},
        we obtain $\vert \tilde{p}^m_n - {p}^m_n\vert \leq 2^{-j}$, 
        implying 
        \begin{align*} &\vert {P}_m -\tilde{P}_m \vert \leq \tilde{P}_m \max\left\{1- \left(1 - 2^{-j}\right)^j , \left(1 + 2^{-j}\right)^j-1 \right\} \\ =  & \tilde{P}_m \left(\left(1 + 2^{-j}\right)^j-1 \right) \leq \tilde{P}_m. \end{align*}
        By employing \Cref{lem:ghosal_extended}, we obtain, for $j$ sufficiently large
          \begin{align*}
           & \Pr\left(\frac{1}{j} \sum \limits_{n=1}^j \vert \riskfun(\bm{\gamma}_{n}) - \hat{\riskfun}_{{j}}(\bm{\gamma}_{n})\vert > \marginerror \ \Bigg\vert \ {\mathcal{D}}_j \right) \\ \leq &
          \Pr\left(\frac{1}{j} \sum \limits_{n=1}^j \vert \riskfun(\bm{\gamma}_{n}) - \hat{\riskfun}_{{j}}(\bm{\gamma}_{n})\vert > \marginerror \ \Bigg\vert \ \tilde{\mathcal{D}}_j \right) \\
        & + \Bigg \vert  \Pr\left(\frac{1}{j} \sum \limits_{n=1}^j \vert \riskfun(\bm{\gamma}_{n}) - \hat{\riskfun}_{{j}}(\bm{\gamma}_{n})\vert > \marginerror \ \Bigg\vert \ {\mathcal{D}}_j \right) \\
        &\quad - \Pr\left(\frac{1}{j} \sum \limits_{n=1}^j \vert \riskfun(\bm{\gamma}_{n}) - \hat{\riskfun}_{{j}}(\bm{\gamma}_{n})\vert > \marginerror \ \Bigg\vert \ \tilde{\mathcal{D}}_j \right) \Bigg \vert \\
        \leq & (1+1) \frac{\marginprobability}{2} = \marginprobability.
    \end{align*}
        
\end{proof}

\Cref{lem:recovered_bounds} states that if we wait long enough between constraint-tightening parameter updates and data collection, we obtain a model that faithfully captures the long-term probability of constraint satisfaction over a finite set of points in $\Gamma$. This allows us to prove our main theorem.

We now prove the first part of \Cref{theorem:mainresult}, which states that the final optimization step in \Cref{alg:onlineupdatealgorithm} is feasible with high probability.



\begin{lem}
\label{lemma:feasibility}
    Let \Cref{ass:compactGamma,ass:decayofsequences,ass:stableloop,ass:rkhsnorm} hold. Then, for any $\marginerror>0$ and $\marginprobability \in (0,1)$, there exists a $\Tfinal\in \mathbb{N}$, such that \eqref{eq:gammasearchgp_final} is feasible with probability at least $1-\marginprobability$.
\end{lem}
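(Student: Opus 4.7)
The goal is to exhibit, with probability at least $1-\marginprobability$, an iterate $\bm{\gamma}_{n}$ of \Cref{alg:onlineupdatealgorithm} that is feasible for~\eqref{eq:gammasearchgp_final}. The plan is to identify a region of $\Gamma$ on which $\riskfun$ exceeds $1-\delta$ by a positive margin, argue that the random-exploration step populates this region with sufficiently many iterates, and argue that the GP model $\hat{\riskfun}_{\Tfinal}$ tracks $\riskfun$ well at most iterates; combining these two facts via pigeonhole produces an iterate simultaneously in the good region and well-approximated.

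First I would invoke \Cref{ass:compactGamma} to pick $\bm{\gamma}^{\ast}\in\Gamma$ with $\riskfun(\bm{\gamma}^{\ast})\geq 1-\delta+\varepsilon_{\text{feas}}$, and use \Cref{ass:rkhsnorm} to note that $\riskfun=s\circ q^{\ast}$ is continuous: $s$ is Lipschitz by assumption, and $q^{\ast}$ is continuous because the reproducing kernel in~\eqref{eq:kernel} has continuous partial derivatives up to order $2\alpha+2$, so every element of its RKHS is continuous. Continuity produces an open neighborhood $\mathcal{N}\subseteq\Gamma$ of $\bm{\gamma}^{\ast}$ with positive volume on which $\riskfun(\bm{\gamma})\geq 1-\delta+\varepsilon_{\text{feas}}/2$. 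Set $p:=\mathrm{vol}(\mathcal{N})/\mathrm{vol}(\Gamma)>0$.

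Second I would lower-bound the number of random iterates landing in $\mathcal{N}$. By the scheduling rule in \Cref{alg:onlineupdatealgorithm}, at least $N_{r}:=\lfloor\Tfinal/\crandom\rfloor$ iterates are drawn independently from the uniform distribution on $\Gamma$, and the count falling in $\mathcal{N}$ is $\mathrm{Binomial}(N_{r},p)$. A Chernoff bound then shows that, for $\Tfinal$ large enough, at least $N_{r}p/2$ such iterates lie in $\mathcal{N}$ with probability at least $1-\marginprobability/2$. In parallel, I would apply \Cref{lem:recovered_bounds} with a tolerance $\tilde{\rho}_{1}$ chosen small enough that $2\tilde{\rho}_{1}\Tfinal/\varepsilon_{\text{feas}}<N_{r}p/2$; for $\Tfinal$ sufficiently large, the lemma yields, with probability at least $1-\marginprobability/2$, that $\tfrac{1}{\Tfinal}\sum_{n=1}^{\Tfinal}|\riskfun(\bm{\gamma}_{n})-\hat{\riskfun}_{\Tfinal}(\bm{\gamma}_{n})|<\tilde{\rho}_{1}$, and Markov's inequality then guarantees that fewer than $N_{r}p/2$ indices $n$ exhibit pointwise error exceeding $\varepsilon_{\text{feas}}/2$.

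A union bound gives both events simultaneously with probability at least $1-\marginprobability$. On this joint event, strictly more random iterates lie in $\mathcal{N}$ than iterates with large pointwise error, so by pigeonhole some index $n$ satisfies both $\bm{\gamma}_{n}\in\mathcal{N}$ and $|\riskfun(\bm{\gamma}_{n})-\hat{\riskfun}_{\Tfinal}(\bm{\gamma}_{n})|\leq\varepsilon_{\text{feas}}/2$, whence $\hat{\riskfun}_{\Tfinal}(\bm{\gamma}_{n})\geq\riskfun(\bm{\gamma}_{n})-\varepsilon_{\text{feas}}/2\geq 1-\delta$, and $\bm{\gamma}_{n}$ is feasible for~\eqref{eq:gammasearchgp_final}. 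The main obstacle is the mismatch between the averaged error bound supplied by \Cref{lem:recovered_bounds} and the pointwise feasibility we require; this is precisely bridged by the Markov/pigeonhole step, which in turn depends crucially on the random exploration schedule to guarantee enough iterates in the good region $\mathcal{N}$.
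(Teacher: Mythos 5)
Your proposal is correct and follows essentially the same route as the paper's proof: both identify a positive-volume subset of $\Gamma$ on which $\riskfun$ exceeds $1-\delta$ by a margin of $\varepsilon_{\text{feas}}/2$ (via continuity of $\riskfun$ under \Cref{ass:rkhsnorm}), both use the uniform random-update steps to guarantee that a constant fraction of the iterates lands in that subset, and both invoke \Cref{lem:recovered_bounds} to convert the averaged error bound into the existence of a feasible iterate. The only differences are cosmetic — you use a Chernoff bound and an explicit Markov/pigeonhole step where the paper uses the law of large numbers and argues by contraposition (infeasibility would force the averaged error above a constant) — so no further changes are needed.
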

\begin{proof}
    Since the kernel $k$ is continuously differentiable and $\Gamma$ is compact, \Cref{ass:rkhsnorm} implies that the function $\riskfun$ is Lipschitz continuous on $\Gamma$. This means there exists a set of non-zero measure $\Gamma_{\text{c}}$, such that $H(\bm{\gamma})>1-\delta + \frac{\varepsilon_{\text{feas}}}{2}$ holds for all $\bm{\gamma} \in \Gamma_{\text{c}}$, where $\varepsilon_{\text{feas}}$ is as in \Cref{ass:compactGamma}.
Let $\bm{\gamma}_i$, $i=1,\ldots,\Tfinal$ denote the sequence of constraint-tightening parameters generated by \Cref{alg:onlineupdatealgorithm}.
Define the finite set
\begin{align*}
\Gamma_{\Tfinal,\text{c}} = \left\{\bm{\gamma}_m \ \vert \ m\leq \Tfinal, \ \bm{\gamma}_m \in \Gamma_{\text{c}}, \right\}
\end{align*}
and recall that whenever $i=n\crandom, \quad n\in \mathbb{N}$, the constraint-tightening parameter vector $\bm{\gamma}_i$ is sampled from a uniform distribution on $\Gamma$. Hence, by the law of large numbers, with probability one
\begin{align}
    \lim_{\Tfinal\rightarrow \infty} \frac{\vert \Gamma_{\Tfinal,\text{c}} \vert}{\Tfinal}  \geq \frac{\text{vol}(\Gamma_{\text{c}})}{\text{vol}(\Gamma)} > 0.
\end{align}
Hence, due to \Cref{ass:compactGamma}, for $\Tfinal$ large enough, we have
\begin{align*}
\Pr\left( \left\vert \frac{\vert \Gamma_{\Tfinal,\text{c}} \vert}{\Tfinal} - \frac{\text{vol}(\Gamma_{\text{c}})}{\text{vol}(\Gamma)} \right\vert \geq \frac{\text{vol}(\Gamma_{\text{c}})}{2\text{vol}(\Gamma)} \right) \leq \frac{\marginprobability}{2}.
\end{align*}
Note that, if \eqref{eq:gammasearchgp_final} is infeasible, then \[\hat{\riskfun}_{\Tfinal}(\bm{\gamma}) <1-\delta \leq \riskfun(\bm{\gamma}) - \frac{\varepsilon_{\text{feas}}}{2}\] holds for all $\bm{\gamma} \in \Gamma_{\text{c}}$, which in turn implies
\begin{align}
    \frac{1}{\Tfinal} \sum \limits_{\bm{\gamma}\in \Gamma_{\text{c}}} \vert \riskfun(\bm{\gamma}) - \hat{\riskfun}_{{\Tfinal}}(\bm{\gamma})\vert \geq \frac{\vert \Gamma_{\Tfinal,\text{c}} \vert \varepsilon_{\text{feas}}}{2 \Tfinal}.
\end{align}
By employing \Cref{lem:recovered_bounds} and $\Tfinal$ large enough, we can bound the probability that \eqref{eq:gammasearchgp_final} is infeasible as
\begin{align*}
        & \Pr\Bigg(  \hat{\riskfun}_{{\Tfinal}}(\bm{\gamma}) < 1 - \delta \quad \forall \bm{\gamma} \in \left\{\bm{\gamma}_1, \ldots, \bm{\gamma}_{\Tfinal} \right\} \Bigg) \\
        \leq & \Pr\Bigg(\frac{1}{\vert \Gamma_{\Tfinal,\text{c}} \vert} \sum \limits_{\bm{\gamma}\in \Gamma_{\Tfinal,\text{c}}} \vert \riskfun(\bm{\gamma}) - \hat{\riskfun}_{{\Tfinal}}(\bm{\gamma})\vert > \frac{\vert \Gamma_{\Tfinal,\text{c}} \vert \varepsilon_{\text{feas}}}{2 \Tfinal} \ \bigg\vert \ {\mathcal{D}}_j \Bigg)  \\
        = & \Pr\Bigg(\Bigg(\frac{1}{\vert \Gamma_{\Tfinal,\text{c}} \vert} \sum \limits_{\bm{\gamma}\in \Gamma_{\Tfinal,\text{c}}} \vert \riskfun(\bm{\gamma}) - \hat{\riskfun}_{{\Tfinal}}(\bm{\gamma})\vert > \frac{\vert \Gamma_{\Tfinal,\text{c}} \vert \varepsilon_{\text{feas}}}{2 \Tfinal} \Bigg) \\
        & \qquad \bigcup \Bigg( \bigg\vert \frac{\vert \Gamma_{\Tfinal,\text{c}} \vert}{\Tfinal} - \frac{\text{vol}(\Gamma_{\text{c}})}{\text{vol}(\Gamma)} \bigg\vert \geq \frac{\text{vol}(\Gamma_{\text{c}})}{2\text{vol}(\Gamma)}  \Bigg)\ \bigg\vert \ {\mathcal{D}}_j \Bigg) \\
        & +  \Pr\Bigg(\Bigg(\frac{1}{\vert \Gamma_{\Tfinal,\text{c}} \vert} \sum \limits_{\bm{\gamma}\in \Gamma_{\Tfinal,\text{c}}} \vert \riskfun(\bm{\gamma}) - \hat{\riskfun}_{{\Tfinal}}(\bm{\gamma})\vert > \frac{\vert \Gamma_{\Tfinal,\text{c}} \vert \varepsilon_{\text{feas}}}{2 \Tfinal} \Bigg) \\
        & \quad \qquad \bigcup \Bigg( \bigg\vert \frac{\vert \Gamma_{\Tfinal,\text{c}} \vert}{\Tfinal} - \frac{\text{vol}(\Gamma_{\text{c}})}{\text{vol}(\Gamma)} \bigg\vert  < \frac{\text{vol}(\Gamma_{\text{c}})}{2\text{vol}(\Gamma)}  \Bigg)\ \bigg\vert \ {\mathcal{D}}_j \Bigg) \\
        \leq & \frac{\marginprobability}{2}  +  \Pr\Bigg(\Bigg(\frac{1}{\vert \Gamma_{\Tfinal,\text{c}} \vert} \sum \limits_{\bm{\gamma}\in \Gamma_{\Tfinal,\text{c}}} \vert \riskfun(\bm{\gamma}) - \hat{\riskfun}_{{\Tfinal}}(\bm{\gamma})\vert >  \frac{\vert \Gamma_{\Tfinal,\text{c}} \vert \varepsilon_{\text{feas}}}{2 \Tfinal} \Bigg) \\
        & \quad \qquad \bigcup \Bigg( \bigg\vert \frac{\vert \Gamma_{\Tfinal,\text{c}} \vert}{\Tfinal} - \frac{\text{vol}(\Gamma_{\text{c}})}{\text{vol}(\Gamma)} \bigg\vert  < \frac{\text{vol}(\Gamma_{\text{c}})}{2\text{vol}(\Gamma)}  \Bigg)\ \bigg\vert \ {\mathcal{D}}_j \Bigg) \\ 
        \leq  &  \frac{\marginprobability}{2} +  \Pr\Bigg( \Bigg( \frac{\Tfinal}{\vert \Gamma_{\Tfinal,\text{c}} \vert}\frac{1}{\Tfinal} \sum \limits_{n=1}^{\Tfinal} \vert \riskfun(\bm{\gamma}_{n}) - \hat{\riskfun}_{{\Tfinal}}(\bm{\gamma}_{n})\vert > \frac{\vert \Gamma_{\Tfinal,\text{c}} \vert \varepsilon_{\text{feas}}}{2 \Tfinal}  \Bigg) \\
        & \quad \qquad \bigcup \Bigg( \bigg\vert \frac{\vert \Gamma_{\Tfinal,\text{c}} \vert}{\Tfinal} - \frac{\text{vol}(\Gamma_{\text{c}})}{\text{vol}(\Gamma)} \bigg\vert  < \frac{\text{vol}(\Gamma_{\text{c}})}{2\text{vol}(\Gamma)}  \Bigg)\ \bigg\vert \ {\mathcal{D}}_j \Bigg) \\ 
        \leq  &  \frac{\marginprobability}{2} \\  + & \Pr\Bigg(  \frac{1}{\Tfinal} \sum \limits_{n=1}^{\Tfinal} \vert \riskfun(\bm{\gamma}_{n}) - \hat{\riskfun}_{{\Tfinal}}(\bm{\gamma}_{n})\vert >  \frac{\varepsilon_{\text{feas}}}{8} \Bigg(\frac{\text{vol}(\Gamma_{\text{c}})}{\text{vol}(\Gamma)}\Bigg)^2 \ \bigg\vert \ {\mathcal{D}}_j  \Bigg) \\ 
        \leq & \marginprobability,
    \end{align*} 
where the second inequality is due to the union bound.
\end{proof}

\textit{Proof of \Cref{theorem:mainresult}:} Since \Cref{lemma:feasibility} establishes the feasibility of \eqref{eq:gammasearchgp_final} with arbitrarily high probability, to prove \Cref{theorem:mainresult} we only need to show that the corresponding solution satisfies the chance constraints up to an arbitrarily small margin. To this end, we show that
\begin{align}
   \vert \hat{\riskfun}_{{\Tfinal}}(\bm{\gamma}) -   {\riskfun}(\bm{\gamma})\vert < \marginerror \quad \forall \bm{\gamma} \in \left\{\bm{\gamma}_1, \ldots, \bm{\gamma}_{\Tfinal} \right\}
\end{align}
holds with arbitrarily high probability $1-\marginprobability$, which implies the desired result. 

Recall that \Cref{alg:onlineupdatealgorithm} collects $\Tcollect = \ccollect \Tfinal$ data points before updating $\bm{\gamma}_i$. Hence, if there exists a $\bm{\gamma} \in \left\{\bm{\gamma}_1, \ldots, \bm{\gamma}_{\Tfinal} \right\}$, such that $\vert \hat{\riskfun}_{{\Tfinal}}(\bm{\gamma}) -   {\riskfun}(\bm{\gamma})\vert \geq \marginerror$, then
\[\sum \limits_{n=1}^{\Tfinal} \vert \hat{\riskfun}_{{\Tfinal}}(\bm{\gamma}_n) -   {\riskfun}(\bm{\gamma}_n)\vert \geq {\Tcollect\marginerror} = \ccollect\Tfinal \marginerror \]
holds. By applying the union bound, we obtain, for $\Tfinal$ high enough,
\begin{align*}
        & \Pr\Bigg(\exists \bm{\gamma} \in \left\{\bm{\gamma}_1, \ldots, \bm{\gamma}_{\Tfinal} \right\}:\quad  \vert \hat{\riskfun}_{{\Tfinal}}(\bm{\gamma}) -   {\riskfun}(\bm{\gamma})\vert \geq \marginerror  \Bigg) \\
        \leq & \Pr\Bigg( \frac{1}{\Tfinal} \sum \limits_{n=1}^{\Tfinal} \vert \hat{\riskfun}_{{\Tfinal}}(\bm{\gamma}_n) -   {\riskfun}(\bm{\gamma}_n)\vert \geq {\ccollect\marginerror} \Bigg) 
        \leq \marginprobability.
    \end{align*} 
   \hfill \scalebox{0.86}{$\blacksquare$}

\bibliography{AllPhDReferences, Dissertation_bib}

\begin{thebibliography}{10}

\bibitem{OldewurtelEtalMorari2014}
F.~{Oldewurtel}, C.~N. {Jones}, A.~{Parisio}, and M.~{Morari}.
\newblock Stochastic model predictive control for building climate control.
\newblock {\em IEEE Transactions on Control Systems Technology},
  22(3):1198--1205, May 2014.

\bibitem{SchwarmNikolaou1999}
A.~T. Schwarm and M.~Nikolaou.
\newblock Chance-constrained model predictive control.
\newblock {\em AIChE Journal}, 45(8):1743--1752, 1999.

\bibitem{JuradoEtalRubio2015}
I.~Jurado, P.~Mill{\'a}n, D.~Quevedo, and F.~R. Rubio.
\newblock Stochastic mpc with applications to process control.
\newblock {\em International Journal of Control}, 88(4):792--800, 2015.

\bibitem{KumarEtalZavala2018}
R.~{Kumar}, M.~J. {Wenzel}, M.~J. {Ellis}, M.~N. {ElBsat}, K.~H. {Drees}, and
  V.~M. {Zavala}.
\newblock A stochastic model predictive control framework for stationary
  battery systems.
\newblock {\em IEEE Transactions on Power Systems}, 33(4):4397--4406, July
  2018.

\bibitem{JiangEtalDong2019}
Y.~{Jiang}, C.~{Wan}, J.~{Wang}, Y.~{Song}, and Z.~Y. {Dong}.
\newblock Stochastic receding horizon control of active distribution networks
  with distributed renewables.
\newblock {\em IEEE Transactions on Power Systems}, 34(2):1325--1341, March
  2019.

\bibitem{Mesbah2016}
A.~Mesbah.
\newblock Stochastic model predictive control: An overview and perspectives for
  future research.
\newblock {\em IEEE Control Systems}, 36(6):30--44, Dec 2016.

\bibitem{mayne2016robust}
David Mayne.
\newblock Robust and stochastic model predictive control: Are we going in the
  right direction?
\newblock {\em Annual Reviews in Control}, 41:184--192, 2016.

\bibitem{FarinaGiulioniScattolini2016}
M.~Farina, L.~Giulioni, and R.~Scattolini.
\newblock Stochastic linear model predictive control with chance constraints
  – a review.
\newblock {\em Journal of Process Control}, 44(Supplement C):53 -- 67, 2016.

\bibitem{SchildbachEtalMorari2014}
G.~Schildbach, L.~Fagiano, C.~Frei, and M.~Morari.
\newblock The scenario approach for stochastic model predictive control with
  bounds on closed-loop constraint violations.
\newblock {\em Automatica}, 50(12):3009--3018, 2014.

\bibitem{CalafioreFagiano2013}
G.C. Calafiore and L.~Fagiano.
\newblock Robust model predictive control via scenario optimization.
\newblock {\em IEEE Transactions on Automatic Control}, 58(1):219--224, Jan
  2013.

\bibitem{BlackmoreEtalWilliams2010}
L.~Blackmore, M.~Ono, A.~Bektassov, and B.C. Williams.
\newblock A probabilistic particle-control approximation of chance-constrained
  stochastic predictive control.
\newblock {\em IEEE Transactions on Robotics}, 26(3):502--517, June 2010.

\bibitem{BlackmoreOnoWilliams2011}
L.~{Blackmore}, M.~{Ono}, and B.~C. {Williams}.
\newblock Chance-constrained optimal path planning with obstacles.
\newblock {\em IEEE Transactions on Robotics}, 27(6):1080--1094, 2011.

\bibitem{FarinaEtalScattolini2015}
M.~Farina, L.~Giulioni, L.~Magni, and R.~Scattolini.
\newblock An approach to output-feedback mpc of stochastic linear discrete-time
  systems.
\newblock {\em Automatica}, 55:140--149, 2015.

\bibitem{hewing2018stochastic}
Lukas Hewing and Melanie~N Zeilinger.
\newblock Stochastic model predictive control for linear systems using
  probabilistic reachable sets.
\newblock In {\em 2018 IEEE Conference on Decision and Control (CDC)}, pages
  5182--5188. IEEE, 2018.

\bibitem{LorenzenEtalAllgoewer2017}
M.~Lorenzen, F.~Dabbene, R.~Tempo, and F.~Allgoewer.
\newblock Constraint-tightening and stability in stochastic model predictive
  control.
\newblock {\em IEEE Transactions on Automatic Control}, 62(7):3165--3177, July
  2017.

\bibitem{hewing2020scenario}
Lukas Hewing and Melanie~N. Zeilinger.
\newblock Scenario-based probabilistic reachable sets for recursively feasible
  stochastic model predictive control.
\newblock {\em IEEE Control Systems Letters}, 4(2):450--455, 2020.

\bibitem{paulson2018nonlinear}
Joel~A Paulson and Ali Mesbah.
\newblock Nonlinear model predictive control with explicit backoffs for
  stochastic systems under arbitrary uncertainty.
\newblock {\em IFAC-PapersOnLine}, 51(20):523--534, 2018.

\bibitem{bradford2019nonlinear}
Eric Bradford, Lars Imsland, and Ehecatl~Antonio del Rio-Chanona.
\newblock Nonlinear model predictive control with explicit back-offs for
  gaussian process state space models.
\newblock In {\em 2019 IEEE 58th Conference on Decision and Control (CDC)},
  pages 4747--4754. IEEE, 2019.

\bibitem{bradford2020stochastic}
Eric Bradford, Lars Imsland, Dongda Zhang, and Ehecatl~Antonio del Rio~Chanona.
\newblock Stochastic data-driven model predictive control using {G}aussian
  processes.
\newblock {\em Computers \& Chemical Engineering}, 139:106844, 2020.

\bibitem{wang2021recursive}
Kai Wang and S{\'e}bastien Gros.
\newblock Recursive feasibility of stochastic model predictive control with
  mission-wide probabilistic constraints.
\newblock In {\em 2021 60th IEEE Conference on Decision and Control (CDC)},
  pages 2312--2317. IEEE, 2021.

\bibitem{OldewurtelEtalLygeros2013}
F.~{Oldewurtel}, D.~{Sturzenegger}, P.~M. {Esfahani}, G.~{Andersson},
  M.~{Morari}, and J.~{Lygeros}.
\newblock Adaptively constrained stochastic model predictive control for
  closed-loop constraint satisfaction.
\newblock In {\em 2013 American Control Conference (ACC)}, pages 4674--4681,
  Washington, DC, USA, 2013.

\bibitem{chatterjee2015stability}
Debasish Chatterjee and John Lygeros.
\newblock On stability and performance of stochastic predictive control
  techniques.
\newblock {\em IEEE Transactions on Automatic Control}, 60(2):509--514, 2015.

\bibitem{baxendale2005renewal}
Peter~H. Baxendale.
\newblock {Renewal theory and computable convergence rates for geometrically
  ergodic Markov chains}.
\newblock {\em The Annals of Applied Probability}, 15(1B):700 -- 738, 2005.

\bibitem{micchelli2006universal}
Charles~A Micchelli, Yuesheng Xu, and Haizhang Zhang.
\newblock Universal kernels.
\newblock {\em Journal of Machine Learning Research}, 7(Dec):2651--2667, 2006.

\bibitem{Rasmussen2006}
Carl~Edward Rasmussen and Christopher~KI Williams.
\newblock {G}aussian processes for machine learning. 2006.
\newblock {\em The MIT Press, Cambridge, MA, USA}, 2006.

\bibitem{ghosal2006posterior}
Subhashis Ghosal and Anindya Roy.
\newblock Posterior consistency of gaussian process prior for nonparametric
  binary regression.
\newblock {\em The Annals of Statistics}, 34(5):2413--2429, 2006.

\bibitem{haggstrom2002finite}
Olle H{\"a}ggstr{\"o}m et~al.
\newblock {\em Finite Markov chains and algorithmic applications}.
\newblock Number~52. Cambridge University Press, 2002.

\bibitem{shen2001rates}
Xiaotong Shen and Larry Wasserman.
\newblock Rates of convergence of posterior distributions.
\newblock {\em The Annals of Statistics}, 29(3):687--714, 2001.

\bibitem{aliprantis2006infinite}
Charalambos~D Aliprantis and Kim~C Border.
\newblock {\em Infinite Dimensional Analysis: A Hitchhiker's Guide}.
\newblock Springer, 2006.

\bibitem{snelson2005sparse}
Edward Snelson and Zoubin Ghahramani.
\newblock Sparse gaussian processes using pseudo-inputs.
\newblock In Y.~Weiss, B.~Sch\"{o}lkopf, and J.~Platt, editors, {\em Advances
  in Neural Information Processing Systems}, volume~18. MIT Press, 2005.

\bibitem{deisenroth2015distributed}
Marc Deisenroth and Jun~Wei Ng.
\newblock Distributed gaussian processes.
\newblock In Francis Bach and David Blei, editors, {\em Proceedings of the 32nd
  International Conference on Machine Learning}, volume~37 of {\em Proceedings
  of Machine Learning Research}, pages 1481--1490, Lille, France, 07--09 Jul
  2015. PMLR.

\bibitem{CannonEtalCheng2011}
M.~Cannon, B.~Kouvaritakis, S.~V. Rakovic, and Q.~Cheng.
\newblock Stochastic tubes in model predictive control with probabilistic
  constraints.
\newblock {\em IEEE Transactions on Automatic Control}, 56(1):194--200, Jan
  2011.

\bibitem{rasmussen2010gaussian}
Carl~Edward Rasmussen and Hannes Nickisch.
\newblock Gaussian processes for machine learning (gpml) toolbox.
\newblock {\em The Journal of Machine Learning Research}, 11:3011--3015, 2010.

\bibitem{hewing2020recursively}
Lukas Hewing, Kim~P Wabersich, and Melanie~N Zeilinger.
\newblock Recursively feasible stochastic model predictive control using
  indirect feedback.
\newblock {\em Automatica}, 119:109095, 2020.

\end{thebibliography}
\bibliographystyle{unsrt}

\begin{IEEEbiography}[{\includegraphics[width=1in,height=1.25in,clip,keepaspectratio]{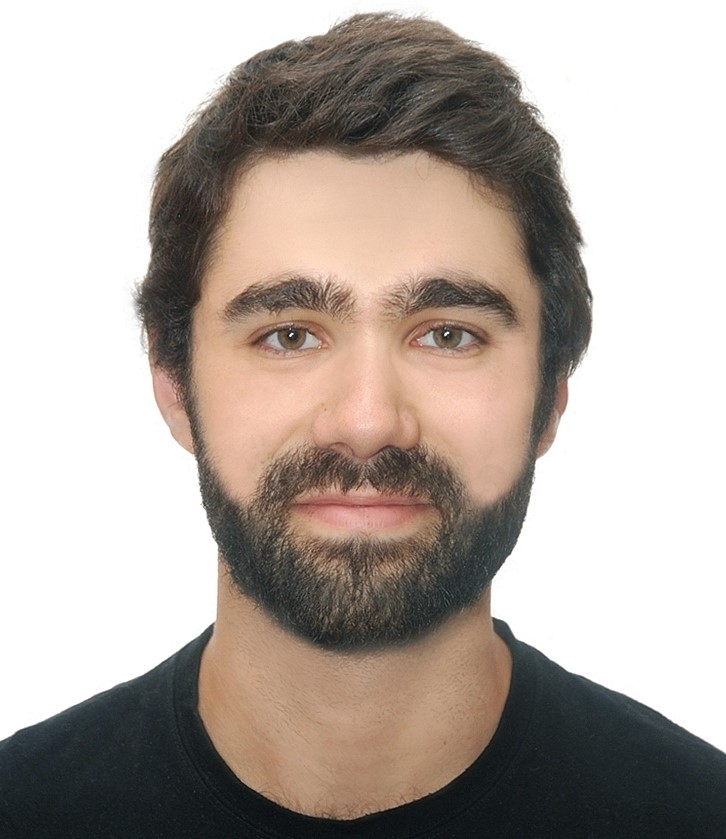}}]{Alexandre Capone}  received the B.S. and M.S. degrees in mechanical engineering from RWTH Aachen University, Aachen, Germany, in 2014 and 2016, respectively.

From 2017, he has been a PhD candidate with the Chair of Information-Oriented Control, at the Department of Electrical and Computer
Engineering of the Techical University of Munich, in Munich, Germany. He is currently visiting Amber lab under the supervision of Prof. Aaron Ames, at the California Institute of Technology. 
He was awarded the Springorium Commemorative coin for his studies at RWTH Aachen. His main research interests include \glspl{gp}, safe learning-based control, distributed control and event-triggered control, with applications to human-machine interaction and robotic systems.
\end{IEEEbiography}

\begin{IEEEbiography}[{\includegraphics[width=1in,height=1.25in,clip]{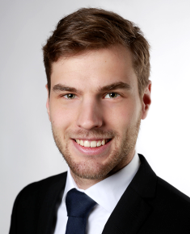}}]
{Tim Br\"udigam} received his
B.Sc. and M.Sc. degree in electrical engineering
from the Technical University of Munich (TUM),
Germany in 2014 and 2017, respectively. In 2022,
he obtained his Ph.D. degree from TUM. During
his studies, he was a research scholar at the California Institute of Technology in 2016. He joined
the Chair of Automatic Control Engineering at
TUM as a research associate in 2017 and was a
visiting researcher at the University of California,
Berkeley in 2021. His main research interest lies in advancing Model Predictive Control (MPC), especially stochastic MPC, with possible application
in automated driving.
\end{IEEEbiography}

\begin{IEEEbiography}[{\includegraphics[width=1in,height=1.25in,clip,keepaspectratio]{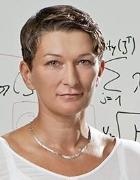}}]
{Sandra Hirche} received the
Diplom-Ingenieur degree in aeronautical engineering
from Technical University Berlin, Berlin, Germany,
in 2002, and the Doktor-Ingenieur degree in electrical engineering from Technical University Munich,
Munich, Germany, in 2005.

From 2005 to 2007, she was awarded a
Post-Doctoral Scholarship from the Japanese Society for the Promotion of Science, 
Fujita Laboratory, Tokyo Institute of Technology, Tokyo, Japan.
From 2008 to 2012, she was an Associate
Professor with the Technical University of Munich. She has been a TUM
Liesel Beckmann Distinguished Professor since 2013 and heads the Chair of
Information-Oriented Control with the Department of Electrical and Computer
Engineering, Technical University Munich. She has authored or coauthored
more than 150 articles in international journals, books, and refereed 
conferences. Her main research interests include cooperative, distributed and
networked control with applications in human–robot interaction, multirobot
systems, and general robotics.

Dr. Hirche has served on the Editorial Board of the IEEE Transactions on Control Network of Systems, 
the IEEE Transactions on Control Systems Technology, and IEEE Transactions on Haptics.
\end{IEEEbiography}

\end{document}